\theoremstyle{plain}
\numberwithin{equation}{section}
\newtheorem{thm}{Theorem}[section]
\newtheorem{cor}[thm]{Corollary}
\theoremstyle{definition}
\newtheorem{example}{Example}
\newcounter{cond}
\newcommand{\Natural}{{\mathbb N}}
\newcommand{\real}{{\mathbb R}}
\newcommand{\tbullet}{\mathrel{\raise .2ex\hbox{\tiny$\bullet$}}} 
\newcommand{\rmob}{\mathrm{Ob\,}}
\newcommand{\rmco}{\mathrm{co\,}}
\newcommand{\rms}{\mathrm{s\,}}
\newcommand{\ityes}{\textit{yes}}      
\newcommand{\itno}{\textit{no}}
\newcommand{\sscript}{\mathcal{S}}
\newcommand{\ab}[1]{\left|#1\right|}
\newcommand{\brac}[1]{\left\{#1\right\}}
\newcommand{\paren}[1]{\left(#1\right)}
\newcommand{\sqbrac}[1]{\left[#1\right]}
\begin{document}

\title{FINITE CLASSICAL AND QUANTUM EFFECT ALGEBRAS}

\author{Stan Gudder}
\address{Department of Mathematics, 
University of Denver, Denver, Colorado 80208}
\email{sgudder@du.edu}
\date{}
\maketitle

\begin{abstract}
In this article, we only consider finite effect algebras. We define the concepts of classical and quantum effect algebras and show that an effect algebra $E$ is classical if and only if there exists an observable that measures every effect of $E$. We next consider matrix representations of effect algebras and prove an effect algebra is classical if and only if its matrix representation has precisely one row. We then discuss sum table for effect algebras. Although these are not as concise as matrix representations, they give more immediate information about effect sums which are the basic operations of an effect algebra. We subsequently study states on effect algebras and prove that classical effect algebras are quantum effect algebras. Finally, we consider composites of effect algebras. This allows us to study interacting systems described by effect algebras. We show that two effect algebras are classical if and only if their composite is classical. We point out that scale effect algebras are not the only classical effect algebras and stress the importance of atoms in this work.
\end{abstract}

\section{Introduction}  
The basic entity in quantum measurement theory is an effect \cite{bgl95,blm96,dp00,fb94}. Effects correspond to two-valued, \ityes-\itno\ (true-false) experiments. If a measurement of an effect $a$ results in the outcome \ityes\ (true), we say that $a$ \textit{occurs} and if it results in the outcome \itno\ (false) we cay that $a$
\textit{does not occur}. There are two special effects $0,1$ which never occurs and always occurs, respectively. For every effect $a$ there is a 
\textit{complementary} effect $a'$ that occurs if and only if $a$ does not occur. Finally, certain pairs of effects $a,b$ can be measured in parallel which results in another effect denoted by $a\oplus b$. If this happens we say that $a\oplus b$ is \textit{defined} and otherwise $a\oplus b$ is \textit{not defined}. Rigorous definitions for these concepts will be given in Section~2 where we define an effect algebra. An \textit{observable} $A$ is a collection of effects
$A=\brac{A_x\colon x\in\Omega _A}$ where $\Omega _A$ is called the \textit{outcome space} of $A$ and $\bigoplus\limits _{x\in\Omega _A}A_x=1$ which means that some outcome $x\in\Omega _A$ always results \cite{bgl95,blm96,dp00}. We say that $A$ \textit{measures} the effects that are sums of $A_x$'s and if the result of a measurement of $A$ is outcome $x$, then the effect $A_x$ occurs. In Section~2, we study the concept of simultaneous observables and define classical and quantum effect algebras. From now on, unless stated otherwise, all effect algebras will be assumed to be finite \cite{bkz23}.

Section~3 considers matrix representations of effect algebras. This gives a concise way of describing an effect algebra without even listing its elements. We show that an effect algebra is classical if and only if its matrix representation has precisely one row. In Section~4 we discuss sum tables for effect algebras. Although these are of as concise as matrix representations, they give more immediate information about the sums $a\oplus b$ which are the basic operations of an effect algebra. Section~5 studies states on effect algebras \cite{gt24}. We prove the important result that classical effect algebras are quantum effect algebras. Of course, the converse does not hold and examples are given. Finally, in Section~6 we consider composites of effect algebras \cite{gt24}. This allows us to study interacting systems described by effect algebras. We show that effect algebras are classical if and only if their composite effect algebras is classical.

To be precise, we say that an effect algebra $E$ is \textit{classical} if any two observables on $E$ are simultaneous. Physically, this means that any two observables on $E$ do not interfere and can be measured together. We show that this is equivalent to the existence of an observable on $E$ that measures all the nonzero effects on $E$. Until now it has been thought that there is only one classical effect algebra (up to isomorphism) with $n$ elements. Such an effect algebra is called a \textit{scale} effect algebra \cite{dp00,fb94}. To show this is not true we exhibit two classical effect algebras with 4, 6, 9, 10, 14 and 15 elements, respectively. Moreover, we find three classical effect algebras with 8 elements, four with 12 elements, five with 16 elements and nine with 36 elements, respectively.

An example of a scale effect algebra is given by a millimeter ruler. Suppose we have a ruler one meter long that is divided into 1,000 millimeter hash marks. This ruler can distinguish lengths to within one millimeter and we call one millimeter the scale of the effect algebra. This scale effect algebra consists of the lengths
$E=\brac{0,\tfrac{1}{1,000}\,,\tfrac{2}{1,000}\,,\cdots ,1}$ that the ruler can measure and each length is an effect in $E$. The corresponding generating observable for $E$ is $A=\brac{\tfrac{1}{1,000}\,,\tfrac{1}{1,000}\,,\cdots ,\tfrac{1}{1,000}}$ where there are 1,000 terms. We see that $A$ is an observable because the sum of its terms is 1. In this case, each nonzero effect in $E$ is the sum of terms of $A$. Notice that if $a\in E$, then $a'=1-a\in E$. Examples of sums are
$\tfrac{3}{1,000}\oplus\tfrac{5}{1,000}=\tfrac{8}{1,000}$ and $\tfrac{10}{1,000}\oplus\tfrac{990}{1,000}=1$. Not all sums are defined, for example
$\tfrac{11}{1,000}\oplus\tfrac{990}{1,000}$ is not defined.

To be precise, a scale effect algebra with $n$ elements has the form
\begin{equation*}
E=\brac{0,a,a\oplus a,\ldots ,a\oplus a\oplus\cdots\oplus a=1}
\end{equation*}
where $a\oplus a\oplus\cdots\oplus a$ has $n-1$ $a$'s and $a$ is the scale for $E$. The simplest example of a classical effect algebra that is not a scale effect algebra is $E=\brac{0,a,b,1}$ with four elements where $a\oplus b=1$ and $a\oplus a$, $b\oplus b$, $a\oplus 1$, $b\oplus 1$ are not defined. We have that $E$ is classical because $A=\brac{a,b}$ is a generating observable. The other classical effect algebra with four elements is the scale effect algebra
$F=\brac{0,c,c\oplus c,1}$ where $c\oplus c\oplus c=1$. The generating observable is $B=\brac{c,c,c}$. It is clear that $E$ and $F$ are not isomorphic.

\section{Simultaneous Observables}  
An effect algebra is a set $E$ with two special elements $0,1\in E$ and a partial binary operation $\oplus$ satisfying the following conditions
\cite{dp00,fb94,gud97,gt24}:
\begin{list}
{(E\arabic{cond})}{\usecounter{cond}
\setlength{\rightmargin}{\leftmargin}}
\item if $a\oplus b$ is defined, then $b\oplus a$ is defined and $b\oplus a=a\oplus b$,
\item if $b\oplus c$, $a\oplus (b\oplus c)$ are defined, then $a\oplus b,(a\oplus b)\oplus c$ are defined and
$a\oplus (b\oplus c)=(a\oplus b)\oplus c$
\item for every $a\in E$, there exists a unique element $a'\in E$ such that $a\oplus a'=1$,
\item if $e\oplus 1$ is defined, then $e=0$.
\end{list}
An element of an effect algebra is called an \textit{effect} and the partial operation $\oplus$ is referred to as an effect sum or just a sum. Unless confusion threatens, we say that $E$ is an effect algebra, when we really mean $(E,\oplus ,0,1)$ is an effect algebra. If $a\oplus b$ is defined, we write $a\perp b$ and it follows from the axioms that $0\perp a$ for $a\in E$ and $0\oplus a=a$. Moreover, $0'=1$ and $a''=a$ for all $a\in E$. We write $a\le b$ if there exists a $c\in E$ such that $b=a\oplus c$. It is easy to show that $c$ is unique and we write $c=b\ominus a$. It follows that $\le$ is a partial order on $E$ and $0\le a\le 1$ for all $a\in E$. Also, one can verify that $a\perp b$ if and only if $a\le b'$. We now recall the usual standard examples of (infinite) effect algebras.

\begin{example}  
(Standard scale effect algebra) The interval $[0,1]\subseteq\real$ is an effect algebra where $a\perp b$ whenever $a+b\le 1$ and in this case $a\oplus b=a+b$. We also have, $a'=1-a$ and $a\le b$ is the usual order so $b\ominus a=b-a$.\hfill\qedsymbol
\end{example}

\begin{example}  
(Standard quantum effect algebra) Let $H$ be a Hilbert space and denote by $E(H)$ the set of all bounded linear operators $a$ on $H$ satisfying the operator inequalities $0\le a\le I$. We define $a\perp b$ whenever $a+b\le I$ and in this case $a\oplus b=a+b$. Again, $a'=I-a$ and $a\le b$ is the usual order so
$b\ominus a=b-a$.\hfill\qedsymbol
\end{example}

In our later investigation a special role will be on those effect algebras that are finite subalgebras of these two standard cases. The main difference between a finite and infinite effect algebra is that a finite effect algebra possesses a generating set of atoms while an infinite effect algebra usually does not. This fact forms the basis for much of the work in the present article. We now elaborate on this statement. An effect $a$ of an effect algebra $E$ is an \textit{atom} if $a\ne 0$ and if $b\in E$ with $b\le a$ implies $b=0$ or $b=a$. Thus, an atom is a smallest nonzero element of an effect algebra. We denote the cardinality of a set $S$ by $\ab{S}$. 

\begin{thm}    
\label{thm21}
If $E$ is a finite effect algebra, then any $b\in E$ with $b\ne 0$ has the form
\begin{equation}                
\label{eq21}
b=a_1\oplus a_2\oplus\cdots\oplus a_n
\end{equation}
where $a_i\in E$ are atoms, $i=1,2,\ldots ,n$ and $a_i=a_j$ is allowed.
\end{thm}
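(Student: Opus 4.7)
My plan is to argue by strong induction on the partial order $\le$, using the finiteness of $E$ in an essential way. Since $E$ is finite, the order $\le$ is well-founded, so it suffices to show that whenever every $c\in E$ with $0<c<b$ can be written as an orthogonal sum of atoms, so can $b$ itself. Equivalently, I will show that the set $S$ of nonzero elements that fail to be expressible in the form \eqref{eq21} must be empty: if $S$ were nonempty, finiteness would give a $\le$-minimal element $b\in S$, and I would aim to contradict its minimality.

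The first step is to handle the base case. If $b$ is an atom, then taking $n=1$ and $a_1=b$ already gives the required expression, so $b\notin S$. Hence any minimal $b\in S$ must fail to be an atom. Since $b\ne 0$, the negation of the atom condition furnishes some $c\in E$ with $0<c<b$, i.e.\ $c\le b$, $c\ne 0$, and $c\ne b$. Setting $d=b\ominus c$, I get the decomposition $b=c\oplus d$ by the definition of $\ominus$.

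The main technical step is to verify that both summands are strictly below $b$ in the order, so that the inductive hypothesis applies to each. For $c$ this is immediate since $c<b$ by construction. For $d$, clearly $d\le b$; the point is $d\ne b$. If instead $d=b$, then $b=c\oplus b$, which together with $b=0\oplus b$ forces $c=0$ by the cancellation law for effect algebras (a standard consequence of (E1)--(E4)), contradicting $c\ne 0$. Hence $0<d<b$. By minimality of $b$ in $S$, neither $c$ nor $d$ lies in $S$, so each admits an atomic decomposition $c=a_1\oplus\cdots\oplus a_k$ and $d=a_{k+1}\oplus\cdots\oplus a_n$. Concatenating, $b=c\oplus d=a_1\oplus\cdots\oplus a_n$, contradicting $b\in S$.

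The main obstacle I anticipate is the verification that $d\ne b$, because this is the only place where the effect algebra axioms are actually used in a nontrivial way; everything else is essentially well-founded induction. If the cancellation law is not invoked as a black box, one would need to derive it from (E1)--(E4) by observing that $c\oplus b = 0\oplus b$ together with associativity and the uniqueness of complements forces $c=0$. Once this micro-step is in hand, the argument is clean, and the theorem follows by the contradiction obtained from the minimal counterexample.
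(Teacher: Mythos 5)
Your argument is correct, but it is organized differently from the paper's. The paper proceeds constructively in two stages: it first runs a descending chain $b>a>c>\cdots$ which, by finiteness, must stop at an atom $a_1\le b$, and then peels atoms off one at a time, writing $b=a_1\oplus e$, then $b=a_1\oplus a_2\oplus e$, and so on, invoking finiteness again to terminate. You instead take a minimal counterexample (equivalently, well-founded induction on $\le$), split $b$ as $c\oplus(b\ominus c)$ for an \emph{arbitrary} element with $0<c<b$, and apply the inductive hypothesis to both halves; no atom below $b$ is ever located explicitly. What your route buys is a single clean appeal to finiteness (well-foundedness of $\le$) and a sharp isolation of the only nontrivial use of the axioms, namely that $b=c\oplus d$ with $c\ne 0$ forces $d<b$; you get this from cancellation, and it also follows even more directly from (E4), since $d=b$ gives $1=b\oplus b'=c\oplus (b\oplus b')=c\oplus 1$, hence $c=0$. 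The paper's route is more algorithmic and, as a by-product, makes explicit the useful fact that every nonzero effect dominates an atom, though its termination statements (``this process must end'') are left informal, whereas your minimal-counterexample framing makes termination rigorous automatically. One small point to add to your write-up: before invoking minimality for $d=b\ominus c$ you should also record that $d\ne 0$, which is immediate because $d=0$ would give $b=c\oplus 0=c$, contradicting $c<b$; with that line included the proof is complete.
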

\begin{proof}
If $b$ is an atom, we are finished. Otherwise, there exists an $a\in E$ with $0<a<b$. If $a$ is an atom, let $a_1=a$. If $a$ is not an atom, let $c\in E$ with $0<c<a$. If $c$ is an atom, let $a_1=c$. If $c$ is not an atom, there exists $d\in E$ with $0<d<c$. This process must eventually stop because $\ab{E}<\infty$. We conclude that there exists an atom $a_1$ such that $b=a_1$ or $a_1<b$. If $a_1<b$ then there exists an $e\in E$ with $e\ne 0$ such that $b=a_1\oplus e$. Continuing, we have atoms $a_1,a_2,\ldots ,a_m$ such that
$b=a_1\oplus\cdots\oplus a_m\oplus e$. Since $\ab{E}<\infty$, this process must end so
\eqref{eq21} holds.
\end{proof}

A set of atoms $\brac{a_1,a_2,\ldots ,a_n}\subseteq E$ is \textit{generating} if every $b\in E$ with $b\ne 0$ has the form \eqref{eq21}. The point is that a generating set need not contain all the atoms of $E$ and it follows from Theorem \ref{thm21} that a generating set of atoms always exists. If $a\in E$ is an atom and $b=a\oplus a\oplus\cdots\oplus a$ ($n$ summands) is defined we write $b=na (0=0a)$. If $\brac{a_1,a_2,\ldots ,a_n}\subseteq E$ is the set of all atoms of $E$, then every nonzero $b\in E$ has the form 
\begin{equation}                
\label{eq22}
b=n_1a_1\oplus n_2a_2\oplus\cdots\oplus n_na_n
\end{equation}
Recall from Section 1, that an \textit{observable} on $E$ is a finite collection of effects $A=\brac{a_1,a_2,\ldots ,a_n}$ where $\bigoplus\limits _{i=1}^na_i=1$ and $a_i$ is the effect that occurs when a measurement $A$ results in the outcome $i$. We say that $A$ is \textit{atomic} if $a_i$ is an atom, $i=1,2,\ldots ,n$ and $a_i=a_j$ is allowed. We shall see that an atomic observable need not contain all the atoms as its effects. We denote the set of observables on $E$ by
$\rmob (E)$. For $A,B\in\rmob (E)$ we write $B\le A$ if every effect of $B$ is a sum of effects of $A$. It is clear that $A\le A$ and $C\le B\le A$ implies $C\le A$. However, $\le$ is not a partial order because $A\le B$, $B\le A$ need not imply $A=B$. For example, let $F=\brac{0,c,c\oplus c,1}$, where $c\oplus c\oplus c=1$, be the effect algebra considered in Section~1. Then the observables $A,B\in\rmob (F)$ given by $A=\brac{c,c,c}$, $B=\brac{c,c\oplus c}$ satisfy $A\le B$,
$B\le A$ but $A\ne B$.

\begin{cor}    
\label{cor22}
If $B\in\rmob (E)$, there exists an atomic $A\in\rmob (E)$ such that $B\le A$.
\end{cor}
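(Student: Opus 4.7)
The plan is to refine $B$ by decomposing each of its effects into atoms via Theorem~\ref{thm21}, and then collecting all those atoms into a single multiset. Write $B = \{b_1, b_2, \ldots, b_m\}$, so that by definition of an observable $b_1 \oplus b_2 \oplus \cdots \oplus b_m = 1$. For each index $i$ with $b_i \neq 0$, Theorem~\ref{thm21} supplies a representation $b_i = a_{i,1} \oplus a_{i,2} \oplus \cdots \oplus a_{i,n_i}$ in which each $a_{i,j}$ is an atom of $E$ (with repetitions allowed). I would then let $A$ be the multiset whose entries are all the $a_{i,j}$ as $i$ ranges over indices with $b_i \neq 0$ and $j$ over $\{1, \ldots, n_i\}$.

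Next I would verify that this $A$ is an observable. Starting from $b_1 \oplus \cdots \oplus b_m = 1$, I substitute the atomic decomposition of each $b_i$ and use axioms (E1) and (E2) to systematically remove the inner parentheses, concluding that the fully flattened sum $\bigoplus_{i,j} a_{i,j}$ is defined and equals $1$. Since every entry of $A$ is an atom, $A$ is atomic. Finally, each nonzero $b_i$ is, by its very construction, the sum $a_{i,1} \oplus \cdots \oplus a_{i,n_i}$ of effects of $A$, which shows $B \le A$.

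The main technical point to handle carefully is the inductive ungrouping step: one must argue that once a nested sum such as $c \oplus (d \oplus e)$ is known to be defined, (E2) guarantees that $(c \oplus d) \oplus e$ is also defined and equal, and that this can be iterated to flatten an arbitrary nesting. A secondary housekeeping issue is the status of any $b_i$ equal to $0$: such terms contribute nothing to the flattened sum and can be omitted from $A$, with $0$ interpreted either as the empty sum or via $0 \oplus a = a$ when needed to exhibit it as a sum of effects of $A$. Neither issue is serious; the essential content of the corollary is simply that atomizing each effect of an observable produces an atomic observable refining it.
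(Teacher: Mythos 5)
Your argument is correct, and its construction differs from the paper's in a way worth noting. The paper also passes to atoms, but it takes $A$ to consist of the \emph{entire} set of atoms of $E$, each repeated ``sufficiently many'' times (using the representation \eqref{eq22} of each $b_i$), and it never verifies that this collection sums to $1$, i.e.\ that it is an observable at all; read literally this can fail, since in a non-classical algebra there may be no observable containing every atom (for instance, in the five-element algebra with atoms $a,b,c$ where $a\oplus b=1$ and $2c=1$, an observable containing $a$, $b$ and $c$ would force $c\perp 1$, hence $c=0$). Your construction sidesteps this: you take exactly the atoms produced by applying Theorem~\ref{thm21} to the effects of $B$, so the flattened sum is defined and equals $1$ precisely because $b_1\oplus\cdots\oplus b_m=1$, making $A\in\rmob (E)$ automatic, atomic by construction, and a refinement of $B$ since each nonzero $b_i$ is by definition a sum of entries of $A$. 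The two technical points you flag---iterated use of (E1) and (E2) to flatten the nested sum, and treating any $b_i=0$ as the empty sum (as the paper itself does when observing $0\le A$)---are exactly the right ones and are routine. In short, your proof establishes the corollary for arbitrary $E$, and is somewhat more careful than the paper's sketch about why the refining atomic observable actually exists.
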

\begin{proof}
If $B=\brac{b_1,b_2,\ldots ,b_m}$ and $\brac{a_1,a_2,\ldots ,a_n}$ is the set of atoms of $E$, then every $b_i$ has the form \eqref{eq22}. Letting $A$ be the observable
\begin{equation*}
A=\brac{a_1,\dots ,a_1,a_2\ldots ,a_2,\ldots ,a_n,\ldots ,a_n}
\end{equation*}
where the number of $a_i$'s is sufficiently large, we have $B\le A$.
\end{proof}

An observable $A=\brac{a_1,a_2,\ldots ,a_n}$ \textit{measures} an effect $b$ if $b=a_{i_1}\oplus a_{a_2}\oplus\cdots\oplus a_{i_m}$ for some $i_j\in\brac{1,2,\ldots ,n}$, $j=1,2,\ldots ,m$ \cite{gt24}. If $A$ measures $b$, we write $b\le A$ and if $A$ measures every $b\in E$ then $A$ is \textit{generating} for $E$. Two effects $a,b\in E$ are \textit{compatible} ($a\,\rmco b$) if there exists a $c\in E$ such that
$c\le a,b$ and
$(a\ominus c)\oplus (b\ominus c)\oplus c$ is defined. For example, if $a\perp b$, then we can let $c=0$ so $a\,\rmco b$. Also, if $a\le b$, then we can let $c=a$ so $a\,\rmco b$. We conclude that $0\,\rmco a$ and $1\,\rmco a$ for every effect $a$.

\begin{thm}    
\label{thm23}
Two effects $a,b\in E$ are compatible if and only if there exists an observable $A\in\rmob (E)$ such that $a,b\le A$.
\end{thm}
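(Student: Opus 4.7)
The plan is to prove both directions by concrete construction, using the observation that in an effect algebra the relation $c\le a$ gives a canonical decomposition $a=(a\ominus c)\oplus c$.

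For the forward direction ($\Rightarrow$), suppose $a\,\rmco b$ via a witness $c\in E$, so $c\le a$, $c\le b$, and the effect
\begin{equation*}
d=(a\ominus c)\oplus (b\ominus c)\oplus c
\end{equation*}
is defined. Since $d\le 1$, axiom (E3) yields its complement $d'$, and I would take as the candidate observable
\begin{equation*}
A=\brac{a\ominus c,\ b\ominus c,\ c,\ d'}.
\end{equation*}
By construction the four entries sum to $d\oplus d'=1$, so $A\in\rmob(E)$. From the definition of $\ominus$ we have $a=(a\ominus c)\oplus c$ and $b=(b\ominus c)\oplus c$, each a sum of effects of $A$, so $a,b\le A$.

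For the converse ($\Leftarrow$), suppose $A=\brac{a_1,\ldots ,a_n}\in\rmob(E)$ with $a,b\le A$. Then there are index sets $I,J\subseteq\brac{1,\ldots ,n}$ with $a=\bigoplus_{i\in I}a_i$ and $b=\bigoplus_{j\in J}a_j$. Set
\begin{equation*}
c=\bigoplus_{k\in I\cap J}a_k,\qquad
a\ominus c=\bigoplus_{i\in I\setminus J}a_i,\qquad
b\ominus c=\bigoplus_{j\in J\setminus I}a_j.
\end{equation*}
Because $I\setminus J$, $J\setminus I$, $I\cap J$ are pairwise disjoint subsets of $\brac{1,\ldots ,n}$, and $\bigoplus_{i=1}^n a_i=1$ exists, every subsum and every regrouping of these terms is defined by (E1)--(E2). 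In particular $(a\ominus c)\oplus (b\ominus c)\oplus c=\bigoplus_{k\in I\cup J}a_k$ is defined, witnessing $a\,\rmco b$.

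The main obstacle is purely bookkeeping: justifying that the partial sums indexed by subsets of $\brac{1,\ldots ,n}$ are well defined and can be regrouped freely. This follows from iterating (E1) and (E2), using that the full sum $\bigoplus_{i=1}^{n}a_i=1$ exists and thus every sub-sum and every reordering inherits definability. Once this is granted, both implications reduce to the explicit constructions above.
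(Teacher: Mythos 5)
Your proof is correct and follows essentially the same route as the paper: for the forward direction you complete $\brac{a\ominus c,\,b\ominus c,\,c}$ to an observable by adjoining the complement of the sum, exactly as the paper does, and for the converse you take $c$ to be the sum of the common effects of the two decompositions. The only difference is cosmetic: the paper splits the converse into a disjoint case ($a\perp b$, $c=0$) and an overlapping case, while your index-set formulation with $c=\bigoplus_{k\in I\cap J}a_k$ handles both uniformly, relying (as the paper implicitly does) on the standard fact that sub-sums and regroupings of a defined sum are defined.
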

\begin{proof}
Suppose $a\,\rmco b$. Then there exists an effect $c$ such that $c\le a,b$ and $(a\ominus c)\oplus (b\ominus c)\oplus c$ is defined. It follows that there exists an effect $d$ such that 
\begin{equation*}
(a\ominus c)\oplus (b\ominus c)\oplus c\oplus d=1
\end{equation*}
Let $A\in\rmob (E)$ be given by $A=\brac{a\ominus c,b\ominus c,c,d}$. Then $a=(a\ominus c)\oplus c$ and $b=(b\ominus c)\oplus c$ so $a,b\le A$. Conversely, suppose $a,b\le A$ where $A\in\rmob (E)$ and $A=\brac{a_1,a_2,\ldots ,a_n}$. Then
$a=a_{i_1}\oplus a_{i_2}\oplus\cdots\oplus a_{i_s}$ and
$b=a_{j_1}\oplus a_{j_2}\oplus\cdots\oplus a_{j_t}$. If $a_{i_u}\ne a_{j_v}$ for all $u,v$, then $a\perp b$ so $a\,\rmco b$. If
\begin{equation*}
a_{i_1}\oplus a_{i_2}\oplus\cdots\oplus a_{i_s}=a_{j_1}\oplus a_{j_2}\oplus\cdots\oplus a_{j_t}=c
\end{equation*}
then $c\le a,b$ and $(a\ominus c)\oplus (b\ominus c)\oplus c$ is defined so $a\,\rmco b$.
\end{proof}

We conclude that two that two effects are compatible if and only if they are simultaneously measured by a single observable. Let
$A=\brac{A_x\colon x\in\Omega _a}$, $B=\brac{B_y\colon y\in\Omega _B}$ be observables on $E$. We say that $A,B$ are \textit{simultaneous} $(A\,\rms B)$ if there exists a $C\in\rmob (E)$ such that $A,B\le C$. We also say that $A,B$ are \textit{simultaneously measurable}. If $A\,\rms B$, then $A_x\,\rmco B_y$ for all $x\in\Omega _A$, $y\in\Omega _B$. Indeed, there exists $C\in\rmob (E)$ such that $A,B\le C$ so $C$ measures $A_x,B_y$ and by Theorem~\ref{thm23},
$A_x\rmco B_y$. The converse does not hold \cite{bgl95,blm96,dp00}.

An effect algebra $E$ is \textit{classical} if $A\,\rms B$ for all $A,B\in\rmob (E)$.

\begin{thm}    
\label{thm24}
The following statements are equivalent.
{\rm{(i)}}\enspace $E$ is classical.
{\rm{(ii)}}\enspace There exists an $A\in\rmob (E)$ such that $B\le A$ for every $B\in\rmob (E)$.
{\rm{(iii)}}\enspace  There exists an $A\in\rmob (E)$ such that $b\le A$ for all $b\in E$.
\end{thm}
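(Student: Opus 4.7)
The plan is to close the cycle (i)$\,\Rightarrow\,$(ii)$\,\Rightarrow\,$(iii)$\,\Rightarrow\,$(i); the implications (ii)$\,\Rightarrow\,$(iii) and (iii)$\,\Rightarrow\,$(i) are essentially definitional, while the substantive step is (i)$\,\Rightarrow\,$(ii), which is where the finiteness of $E$ will be used.

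For (ii)$\,\Rightarrow\,$(iii), given any $b\in E$, the pair $\brac{b,b'}$ is an observable since $b\oplus b'=1$, so (ii) yields $\brac{b,b'}\le A$ and in particular $b\le A$. For (iii)$\,\Rightarrow\,$(i), given $B,C\in\rmob (E)$, every effect of $B$ and every effect of $C$ lies in $E$ and hence is measured by $A$; thus $B,C\le A$, so $A$ witnesses $B\,\rms\,C$.

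For (i)$\,\Rightarrow\,$(ii), I would take a $\le$-maximal atomic observable and show it dominates every observable. The key preparatory fact is that atomic observables form a finite set modulo the equivalence $A\sim A'$ defined by $A\le A'$ and $A'\le A$: an atomic observable is $\sim$-determined by the multiset of its effects, the atoms of $E$ form a finite set, and each atom $t$ appears with bounded multiplicity because the chain $t,2t,3t,\ldots$ is strictly increasing in $E$ until it becomes undefined (if $nt=(n{-}1)t$, then both $0$ and $t$ witness $(n{-}1)t\le (n{-}1)t$, forcing $t=0$ by uniqueness of the summand). Hence the induced partial order on $\sim$-classes has maximal elements; pick an atomic $A$ in one of them. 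For any $B\in\rmob (E)$, property (i) yields $C\in\rmob (E)$ with $A,B\le C$, and Corollary~\ref{cor22} produces an atomic $A'$ with $C\le A'$. Then $A\le A'$, so maximality forces $A'\le A$, and the chain $B\le C\le A'\le A$ gives (ii).

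The principal obstacle is the finiteness argument for atomic observables up to $\sim$ in (i)$\,\Rightarrow\,$(ii); once that is granted, the interaction of Corollary~\ref{cor22} with the maximality of $A$ is a short chase.
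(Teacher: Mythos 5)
Your proof is correct, and your treatment of (ii)$\Rightarrow$(iii) and (iii)$\Rightarrow$(i) coincides with the paper's (the paper handles $b=0,1$ by the separate remark that $0,1\le A$ always, but this is cosmetic). The substantive implication (i)$\Rightarrow$(ii) is done by a genuinely different route. The paper simply asserts that, $E$ being finite, there are finitely many observables $A_1,\ldots ,A_n\in\rmob (E)$, and then chains simultaneity: $A_1,A_2\le C_{12}$, $A_3,C_{12}\le C_{123}$, and so on, until one observable dominates all of them. You instead pick a $\le$-maximal atomic observable $A$ (maximal among the finitely many $\sim$-classes, finiteness coming from the finitely many atoms together with the bound $n(a)$ on multiplicities, which you justify by the cancellation argument), and for an arbitrary $B$ you combine classicality, Corollary~\ref{cor22} and maximality to obtain $B\le C\le A'\le A$. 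The paper's argument is shorter, but its premise that $\rmob (E)$ is literally finite is delicate: observables carry outcome labels and may contain repeated $0$ effects (the paper itself pads observables with zeros in Section~6), so finiteness really only holds up to the equivalence $A\le A'$, $A'\le A$ that you introduce. Your route makes precisely this point rigorous, at the cost of the equivalence-class and maximality bookkeeping plus one use of Corollary~\ref{cor22}; it also gives the extra information that the dominating observable can be taken atomic, which anticipates the atomic generating observable produced in Theorem~\ref{thm25}.
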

\begin{proof}
(i)$\Rightarrow$(ii)\enspace Suppose $E$ is classical. Since $E$ is finite, there exists a finite number of observables $A_1,A_2,\ldots ,A_n\in\rmob (E)$. Since
$A_1\,\rms A_2$ there exists a $C_{12}\in\rmob (E)$ such that $A_1,A_2\le C_{12}$. Since $A_3\,\rms C_{12}$ there exists a $C_{123}$ such that 
$A_3,C_{12}\le C_{123}$. Therefore, $A_1,A_2,A_3\le C_{123}$. Continuing, there exists a $C\in\rmob (E)$ such that $A_1,A_2,\ldots ,A_n\le C$.
(ii)$\Rightarrow$(iii)\enspace Suppose there exists $A\in\rmob (E)$ such that $B\le A$ for all $B\in\rmob (E)$. If $b\in E$ with $b\ne 0,1$, then
$B=\brac{b,b'}\in\rmob (E)$. Since $B\le A$ we have $b\le B\le A$.
(iii)$\Rightarrow$(i)\enspace Suppose there is an $A\in\rmob (E)$ with $b\le A$ for all $b\in E$. If $B\in\rmob (E)$ and $b\le B$, then $b\le A$ so $B\le A$. If
$C\in\rmob (E)$, then $C\le A$ so $B\,\rms C$. Hence, $E$ is classical.
\end{proof}

Notice that $0\le A$ for all $A\in\rmob (E)$ because $0=\bigoplus\limits _{x\in\emptyset}A_x$. Also, $1\le A$ for all $A\in\rmob (E)$. If $E,F$ are effect algebras, a map $f\colon E\to F$ is \textit{additive} if $a,b\in E$ with $a\perp b$ implies $f(a)\perp f(b)$ and $f(a\oplus b)=f(a)\oplus f(b)$. An additive map that satisfies $f(1)=1$ is a \textit{morphism} \cite{fb94}. A morphism $f\colon E\to F$ is a \textit{monomorphism} if $f(a)\perp f(b)$ implies $a\perp b$ and
$f(a\oplus b)=f(a)\oplus f(b)$. A surjective monomorphism is called an \textit{isomorphism} \cite{fb94}. We frequently identify isomorphic effect algebras. We say that $E$ is an \textit{effect subalgebra} of $F$ if there exists a monomorphism $f\colon E\to F$. A finite effect algebra $E$ is a \textit{scale effect algebra} if $E$ is an effect subalgebra of the standard scale effect algebra $[0,1]$ of Example~1. A finite effect algebra $E$ is a \textit{quantum effect algebra} if $E$ is an effect subalgebra of the standard quantum effect algebra $E(H)$ of Example~2 \cite{gt24}. An effect algebra $E$ is a \textit{real-model} of an effect algebra $F$ if
$E=\brac{0,\lambda _1,\lambda _2,\ldots ,\lambda _n,1}$ $\lambda _i\in (0,1)\subseteq\real$, where $\lambda _i\perp \lambda _j$ if and only if $\lambda _i+\lambda _j\le 1$ and $E$ is isomorphic to $F$. If we only have $\lambda _i\perp\lambda _j$ when
$\lambda _i+\lambda _j\in E$ and $E$ is isomorphic to $F$, then $E$ is a \textit{weak real-model} of $F$. It is clear that a real-model is a weak real model but the converse does not hold.

\begin{example}  
We now show that a scale effect algebra $F$ is classical. If $F$ has $n+1$ elements, then the effect algebra
$E=\brac{0,\tfrac{1}{n}\,,\tfrac{2}{n}\,,\cdots ,\tfrac{n-1}{n}\,,1}$ is a real-model for $F$. Now $A=\brac{\tfrac{1}{n}\,,\tfrac{1}{n}\,,\cdots ,\tfrac{1}{n}}$ where there are $n$ terms is an observable on $E$. Since $\tfrac{m}{n}\le A$, $m=0,1,\ldots ,n$, it follows from Theorem~\ref{thm24} that $E$ is claasical. Now $F$ is isomorphic to $E$ so $F$ is classical.\hfill\qedsymbol
\end{example}

\begin{example}  
An example of a classical effect algebra that is not a scale effect algebra is given at the end of Section~1. We now give another example. Let
\begin{equation*}
E=\brac{0,\tfrac{1}{5}\,,\tfrac{1}{3}\,,\tfrac{7}{15}\,,\tfrac{8}{15}\,,\tfrac{2}{3}\,,\tfrac{4}{5}\,,1}
\end{equation*}
be a weak real-model for an effect algebra where $\lambda _i\perp\lambda _j$ if $\lambda _i+\lambda _j\in E$. Since
$A=\brac{\tfrac{1}{5}\,,\tfrac{1}{3}\,,\tfrac{7}{15}}\in\rmob (E)$ and every $a\in E$ satisfies $a\le A$, it follows from Theorem~\ref{thm24} that $E$ is classical. However, it is clear that $E$ is not a scale effect algebra.\hfill\qedsymbol
\end{example}

\begin{example}  
We now show there are effect algebras that are not classical. The weak real-model $\brac{0,\tfrac{1}{5}\,,\tfrac{1}{2}\,,\tfrac{4}{5}\,,1}$ has observables,
$\brac{1},\brac{\tfrac{1}{2}\,,\tfrac{1}{2}},\brac{\tfrac{1}{5}\,,\tfrac{4}{5}},$ $\brac{\tfrac{1}{5}\,,\tfrac{1}{5}\,,\tfrac{1}{5}\,,\tfrac{1}{5}\,,\tfrac{1}{5}}$ and none of these observables have all effects below them. By Theorem~\ref{thm24} this effect algebra is not classical. Another example is
$\brac{0,\tfrac{1}{4}\,,\tfrac{1}{3}\,,\tfrac{2}{3}\,,\tfrac{3}{4}\,,1}$ which has observables
$\brac{1},\brac{\tfrac{1}{3}\,,\tfrac{2}{3}},\brac{\tfrac{1}{4}\,,\tfrac{3}{4}}\brac{\tfrac{1}{3}\,,\tfrac{1}{3}\,,\tfrac{1}{3}}$, and
$\brac{\tfrac{1}{4}\,,\tfrac{1}{4}\,,\tfrac{1}{4}\,,\tfrac{1}{4}}$. None of these observables have all effects below them.\hfill\qedsymbol
\end{example}

It is shown in later sections that there are quantum effect algebras that are not classical. If an effect $a$ is an atom we denote by $n(a)$ the largest integer such that $n(a)a$ is defined. We end this section with another characterization of classical effect algebras.

\begin{thm}    
\label{thm25}
An effect algebra $E$ is classical if and only if
\begin{equation}                
\label{eq23}
n(a_1)a_1\oplus n(a_2)a_2\oplus\cdots\oplus n(a_m)a_m=1
\end{equation}
where $a_i$ are the atoms of $E$, $i=1,2,\ldots ,m$.
\end{thm}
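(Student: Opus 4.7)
The plan is to use Theorem~\ref{thm24} and Corollary~\ref{cor22} to convert classicality into the existence of a single atomic observable measuring every effect, and then to identify that observable with $\{a_1^{n(a_1)},\ldots,a_m^{n(a_m)}\}$ so that \eqref{eq23} becomes precisely the statement that this list is itself an observable.

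For $(\Leftarrow)$, suppose \eqref{eq23} holds. Then the multiset $A$ consisting of $n(a_i)$ copies of $a_i$ for each atom is itself an atomic observable, because the asserted $\oplus$-sum equals $1$. Given any nonzero $b \in E$, Theorem~\ref{thm21} provides a decomposition $b = n_1' a_1 \oplus \cdots \oplus n_m' a_m$ as in \eqref{eq22}, and each subsum $n_i' a_i$ must be defined as part of this expression, forcing $n_i' \leq n(a_i)$. Hence $b$ is a sum of effects drawn from $A$ (using $n_i'$ of the $a_i$-copies for each $i$), so $b \leq A$, and Theorem~\ref{thm24}(iii) yields that $E$ is classical.

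For $(\Rightarrow)$, assume $E$ is classical. Theorem~\ref{thm24}(iii) together with Corollary~\ref{cor22} supplies an atomic observable $A$ with $B \leq A$ for every $B \in \rmob(E)$. Every atom $a_i$ of $E$ already occurs in $A$ (otherwise the decomposition of $a_i$ via $A$'s atoms would violate its minimality), so $A$ contains $k_i \geq 1$ copies of $a_i$ with $\bigoplus_i k_i a_i = 1$ and consequently $k_i \leq n(a_i)$. The theorem then reduces to the reverse inequality. For each fixed $i$, I would build an auxiliary atomic observable $A_i$ in which $a_i$ appears exactly $n(a_i)$ times: take $n(a_i)$ copies of $a_i$ together with an atomic decomposition of $(n(a_i) a_i)'$ furnished by Theorem~\ref{thm21}, noting that maximality of $n(a_i)$ keeps further $a_i$'s out of that decomposition. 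Since $A_i \leq A$ and both observables are atomic, the coarse-graining forces every block of the induced partition of $A$'s outcomes to sum to an atom and hence to be a singleton, so $A_i$ and $A$ agree as multisets. In particular $A$ contains at least $n(a_i)$ copies of $a_i$, giving $k_i = n(a_i)$; applying this for every $i$ identifies $A$ with $\{a_1^{n(a_1)},\ldots,a_m^{n(a_m)}\}$ and $\bigoplus A = 1$ becomes \eqref{eq23}.

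The main obstacle is the rigidity step used at the end of the forward direction, namely that an atomic observable coarse-graining another atomic observable must agree with it up to relabelling of outcomes. This rests on the structural fact that a nonzero atom cannot be a $\oplus$-sum of two or more nonzero effects, so no block of the partition can contain more than one element. Once this lemma is secured the constructions of $A$ and each $A_i$ are forced to coincide, and \eqref{eq23} follows.
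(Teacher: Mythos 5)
Your backward direction is correct and essentially the paper's own argument: when \eqref{eq23} holds, the multiset $A$ consisting of $n(a_i)$ copies of each atom is an observable, every nonzero $b=\bigoplus y_ia_i$ as in \eqref{eq22} has $y_i\le n(a_i)$ because the subsum $y_ia_i$ is defined, so $b\le A$ and Theorem~\ref{thm24} applies. The forward direction, however, rests on a ``rigidity'' step that is not available. The relation $B\le A$ between observables only says that each effect of $B$ is \emph{some} sum of effects of $A$; it does not induce a partition of $A$'s outcomes, and different effects of $B$ may be witnessed by the very same effects of $A$ (the paper's example $A=\brac{c,c,c}$, $B=\brac{c,c\oplus c}$ with $A\le B$, $B\le A$, $A\ne B$ already shows no injectivity or disjointness is built into $\le$). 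Hence from $A_i\le A$ with both observables atomic you may conclude only that every atom occurring in $A_i$ occurs \emph{at least once} in $A$ --- exactly the bound $k_i\ge 1$ you already had --- and not that multiplicities agree. In fact the rigidity lemma is false in general: the matrix $\begin{bmatrix}1&3\\3&1\end{bmatrix}$ satisfies the conditions of Theorem~\ref{thm32}, so there is an effect algebra with atoms $a,b$ in which $\brac{a,b,b,b}$ and $\brac{a,a,a,b}$ are atomic observables, each $\le$ the other, yet they differ as multisets. Consequently your key claim, that the dominating observable $A$ contains at least $n(a_i)$ copies of $a_i$, is not established, and the forward direction is incomplete.

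The paper closes precisely this gap by a different device: since $E$ is classical, the dominating observable $A$ measures every element of $E$, in particular the effects $2a_i,3a_i,\ldots ,n(a_i)a_i$ themselves, each of which lies in $E$ by the definition of $n(a_i)$. Writing $ka_i$ as a sum of effects of $A$ and peeling off one copy of $a_i$ at a time, the paper argues inductively that $A$ must actually contain $n(a_i)$ distinct copies of $a_i$, so that $n(a_1)a_1\oplus\cdots\oplus n(a_m)a_m$ is defined as a subsum of $A$; maximality of the $n(a_i)$ then excludes the possibility that this sum is strictly below $1$, since otherwise some further atom $a_i$ could be adjoined and $\paren{n(a_i)+1}a_i$ would be defined. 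If you want to salvage your argument, drop the auxiliary observables $A_i$ and the multiset-rigidity claim, and instead exploit that the specific elements $ka_i\in E$ are measured by $A$, which is what forces the multiplicities $k_i=n(a_i)$.
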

\begin{proof}
Suppose $E$ is classical. Then by Theorem~\ref{thm24}, there exists an observable $A\in\rmob (E)$ such that $b\le A$ for every $b\in E$. We now show that 
\begin{equation}                
\label{eq24}
A=\brac{a_1,\ldots ,a_1,a_2,\ldots ,a_2,\ldots ,a _m,\ldots ,a_m}
\end{equation}
where there are $n(a_i)$ copies of $a_i$, $i=1,2,\ldots ,m$. We know that $n(a_i)\ge 1$ because $a_i\le A$, $i=1,2,\ldots ,m$. If
$a_1\perp a_1$, then
$a_1\oplus a_1=\oplus b_i$ where $b_i\in A$. Hence,
\begin{equation*}
a_1=a_1\oplus a_1\ominus a_1=\oplus b_i\ominus a_1\in A
\end{equation*}
We conclude there are two versions of $a_1$ in $A$ so $\brac{a_1,a_1}\subset A$. Suppose $a_1\oplus a_1\oplus a_1$ is defined. Then
$\brac{a_1,a_1}\subseteq A$ as before and $a_1\oplus (a_1\oplus a_1)=\oplus b_i$ where $b_i\in A$. As before, $\brac{a_1,a_1,a_1}\subseteq A$.
Continuing, $\brac{a_1,a_1,\ldots , a_1}\subseteq A$ where there are $n(a_1)$ terms. The same holds for $a_2,a_3,\ldots a_n$. We conclude that 
\begin{equation*}
n(a_1)a_1\oplus n(a_2)a_2\oplus\cdots\oplus n(a_m)a_m\le 1
\end{equation*}
If $n(a_1)a_1\oplus n(a_2)a_2\oplus\cdots\oplus n(a_m)a_m<1$, then there exists an atom $a_i$ such that 
\begin{equation*}
n(a_1)a_1\oplus  n(a_2)a_2\oplus\cdots\oplus n(a_m)a_m\oplus a_i
\end{equation*}
is defined. Buth then $\paren{n(a_i)+1}a_i$ is defined which is a contradiction. Hence, \eqref{eq23} holds. Conversely, suppose \eqref{eq23} holds and $A$ is the observable \eqref{eq24}. If $b\in E$, then by \eqref{eq22} we have $b=\oplus x_ia_i$, $x_i\in\brac{0,1,\ldots ,n(a_i)}$. Hence $b\le A$ so $E$ is classical.
\end{proof}

It follows from Theorem~\ref{thm25} that if $E$ is classical, then the observable $A$ given by \eqref{eq24} is generating for $E$. We conclude that $E$ is classical if and only if $E$ possesses a generating, atomic observable.

\section{Matrix Representations}  
Let $E$ be a finite effect algebra with $m$ atoms $a_1,a_2,\ldots ,a_m$ and let $\Natural _0$ be the set of nonnegative integers. If
$\bigoplus\limits _{\lambda =1}^mx_ia_i=1$, $x_i\in\Natural _0$, we call $\brac{x_1a_1,x_2a_2,\ldots ,x_ma_m}$ a \textit{semi-atomic observable} or an
\textit{atomic row} and $[x_1x_2\ldots x_m]$ a \textit{row} of $E$. If there are $n$ different rows $[y_{11}y_{12}\cdots y_{1m}]$, \newline 
$[y_{21}y_{22}\cdots y_{2m}],\ldots ,[y_{n1}y_{n2}\cdots y_{nm}]$ for $E$, we call the $n\times m$ matrix $M(E)=[y_{ij}]_{n\times m}$ a
\textit{matrix representation} of $E$ \cite{bkz23}. We consider an interchange of rows or columns of $M(E)$ to be equivalent and write $M(E)\approx M(F)$ if $M(E)$ can be obtained from $M(F)$ by an interchange of rows or columns of $M(F)$. The following result is proved in \cite{bkz23}.

\begin{thm}    
\label{thm31}
$E$ is isomorphic to $F$ if and only if $M(E)\approx M(F)$.
\end{thm}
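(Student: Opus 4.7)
My plan is to use Theorem~\ref{thm21}, which guarantees that every nonzero element of a finite effect algebra is a sum of atoms, and to argue that the rows of $M(E)$ encode enough of the partial operation $\oplus$ to recover the effect algebra up to isomorphism.

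For the forward direction, let $\phi : E \to F$ be an isomorphism. Because $\phi$ is a bijection preserving $\oplus$ (and hence $\le$ and $\ominus$), it restricts to a bijection on atoms: if $c \le \phi(a)$ with $c = \phi(b)$, then $b \le a$, forcing $b \in \brac{0,a}$ and hence $c \in \brac{0,\phi(a)}$. After relabeling the columns of $M(F)$, I may assume $\phi(a_i) = b_i$ where $a_i$ and $b_i$ are the atoms of $E$ and $F$. Applying $\phi$ to any row $\bigoplus x_i a_i = 1$ of $M(E)$ produces the row $\bigoplus x_i b_i = 1$ of $M(F)$, and the same argument for $\phi^{-1}$ gives the reverse inclusion of rows, so $M(E) \approx M(F)$.

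For the reverse direction, assume the atoms are labeled so that $M(E)$ and $M(F)$ have the same set of rows, and define $\phi(0)=0$ and $\phi\paren{\bigoplus x_i a_i} = \bigoplus x_i b_i$. The crux is well-definedness, which I would obtain from two characterizations stated purely in terms of rows: (a) $\bigoplus x_i a_i$ is defined in $E$ iff $(x_i)$ is componentwise dominated by some row of $M(E)$; (b) $\bigoplus x_i a_i = \bigoplus x_i' a_i$ iff there is a tuple $(z_i)$ such that both $(x_i+z_i)$ and $(x_i'+z_i)$ are rows. For (a), if the sum $c$ is defined then by Theorem~\ref{thm21} its complement has the form $c' = \bigoplus z_i a_i$, and the equation $c \oplus c' = 1$ produces the required dominating row. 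For (b), equality of the two sums means they share a complement, which gives the common $(z_i)$; conversely, if the same $(z_i)$ works for both, then both sums equal the complement of $\bigoplus z_i a_i$. Since these characterizations refer only to rows, which agree in $F$, the map $\phi$ is well-defined and injective, and (a) shows it preserves $\perp$ and $\oplus$. Surjectivity follows from Theorem~\ref{thm21} applied to $F$, and $\phi(1) = 1$ because full rows sum to $1$ on both sides.

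The main obstacle is the equality characterization (b) in the reverse direction; once that is established, the forward direction and the remaining morphism and monomorphism verifications are routine.
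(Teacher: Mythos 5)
Your argument is correct, but note that the paper itself gives no proof of this theorem: it is quoted from the reference \cite{bkz23}, so there is nothing internal to compare against, and your write-up stands as a self-contained proof. The structure is sound. The forward direction (an isomorphism restricts to a bijection of atoms and transports each row equation $\bigoplus x_ia_i=1$ to a row of $M(F)$, with $\phi^{-1}$ giving the reverse containment) is routine and right. The real content is exactly where you put it: your characterizations (a) and (b). Claim (a) follows as you say, using Theorem~\ref{thm21} to write $c'$ as a sum of atoms for one direction and the fact that any sub-sum of a defined orthogonal sum is defined for the other; claim (b) is correct in both directions, the key point in the converse being the uniqueness of the complement (axiom (E3)): if $(x_i+z_i)$ and $(x_i'+z_i)$ are both rows, then $\bigoplus x_ia_i$ and $\bigoplus x_i'a_i$ are both complements of $\bigoplus z_ia_i$, hence equal. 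Two small points you lean on implicitly and should make explicit in a polished version: first, that the matrix $M(E)$ by definition lists \emph{all} tuples $(x_i)$ with $\bigoplus x_ia_i=1$, which is what makes (a) and (b) transferable between $E$ and $F$; second, the repeated use of generalized commutativity and associativity of finite orthogonal sums (so that $\bigoplus x_ia_i\oplus\bigoplus z_ia_i=\bigoplus(x_i+z_i)a_i$ whenever either side is defined), which is standard but is the workhorse behind (a), (b), and the verification that $\phi$ is an additive monomorphism. With those acknowledged, the well-definedness, injectivity, surjectivity (via Theorem~\ref{thm21} in $F$ together with (a)), and $\phi(1)=1$ all go through as you describe.
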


Necessary and sufficient conditions for an $n\times m$ matrix $[y_{ij}]_{n\times m}$ where $y_{ij}\in\Natural _0$ to be a matrix representation for an effect algebra with $m$ atoms are given in \cite{bkz23}.

\begin{thm}    
\label{thm32}
An $n\times m$ matrix $M=[y_{ij}]_{n\times m}$, $y_{ij}\in\Natural _0$ is a matrix representation for an effect algebra with $m$ atoms if and only if\newline
{\rm{(1)}}\enspace All rows and columns of $M$ have at least one nonzero entry.\newline
{\rm{(2)}}\enspace Let $e_k=[0\cdots 010\cdots 0]$ where 1 is the $k$th entry. If $r_i$ is the $i$th row and $r_j$ is the $j$th row and
$r_j\ge r_i-e_k\ge 0$ for some $1\le k\le m$, then $r_i=r_j$.\newline
{\rm{(3)}}\enspace If $r_i,r_j,r_k$ are rows of $M$ and $r_i+r_j\ge r_k$, then $r_i+r_j-r_k$ is a row of $M$.
\end{thm}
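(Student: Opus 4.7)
The plan is to prove both directions separately by viewing each row $[y_{i1}\cdots y_{im}]$ of $M$ as a record of an atomic decomposition $\bigoplus_\ell y_{i\ell}a_\ell = 1$; the three matrix conditions then correspond to structural facts about such decompositions, so necessity is a direct check while sufficiency requires building an effect algebra from the matrix.

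For necessity, assume $M = M(E)$. For (1), a zero row would yield $0 = 1$, which fails in any non-trivial effect algebra, and for each atom $a_k$ the identity $1 = a_k \oplus a_k'$ combined with an atomic decomposition of $a_k'$ (Theorem~\ref{thm21}) produces a row whose $k$th entry is positive. For (2), given $r_j \ge r_i - e_k \ge 0$, put $v = r_j - (r_i - e_k)$; repeated use of (E2) inside the defined sum $\bigoplus_\ell (r_j)_\ell a_\ell = 1$ regroups it as
\begin{equation*}
\bigoplus_\ell (r_i - e_k)_\ell a_\ell \;\oplus\; \bigoplus_\ell v_\ell a_\ell = 1.
\end{equation*}
Subtracting one copy of $a_k$ from $\bigoplus_\ell (r_i)_\ell a_\ell = 1$ identifies the first term as $a_k'$, so $\bigoplus_\ell v_\ell a_\ell = a_k$. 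Since $a_k$ is an atom, any nonzero summand $v_\ell a_\ell$ forces $a_\ell = a_k$, and cancellation of $\ominus$ then gives $v_k = 1$ and $v_\ell = 0$ otherwise, so $v = e_k$ and $r_j = r_i$. For (3), given $r_i + r_j \ge r_k$, set $u_\ell = \min\brac{(r_i)_\ell,(r_k)_\ell}$ componentwise, so $u \le r_i$, $u \le r_k$, and $r_k - u \le r_j$. Splitting the defined sums coming from $r_i$ and $r_k$ along $u$ identifies both $\bigoplus_\ell (r_i - u)_\ell a_\ell$ and $\bigoplus_\ell (r_k - u)_\ell a_\ell$ with $\paren{\bigoplus_\ell u_\ell a_\ell}'$, and splitting the sum from $r_j$ along $r_k - u$ identifies $\bigoplus_\ell (r_j - r_k + u)_\ell a_\ell$ with $\bigoplus_\ell u_\ell a_\ell$. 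These last two effects are complements, so their $\oplus$-sum equals $1$, and that sum has total multiplicity $r_i + r_j - r_k$, which is therefore a row.

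For sufficiency, one builds $E$ directly from $M$. Let $\Lambda = \brac{v \in \Natural_0^m : v \le r \text{ for some row } r}$, and declare $v \sim w$ iff there exists $u$ with $v + u$ and $w + u$ both rows. Reflexivity and symmetry are immediate, and transitivity is the step that uses (3): given witnesses $(u_1,r_a,r_b)$ for $v_1 \sim v_2$ and $(u_2,r_c,r_d)$ for $v_2 \sim v_3$, the inequality $r_b + r_d \ge r_c$ together with (3) produces the row $r_b + r_d - r_c = u_1 + v_3$, so $(u_1, r_a, u_1 + v_3)$ witnesses $v_1 \sim v_3$. Set $E = \Lambda/{\sim}$, $0 = [0]$, $1 = [r]$ for any row $r$, $[v]\oplus[w] = [v+w]$ when $v + w \in \Lambda$, and $[v]' = [r - v]$ for any row $r \ge v$. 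Condition (2) then forces $[e_1],\ldots,[e_m]$ to be the atoms of $E$ (since $e_k \sim 0$ would give rows $r$ and $r + e_k$, contradicting (2)) with $n([e_k])$ equal to the largest $k$th coordinate appearing in any row, and one reads off that $M(E) = M$.

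The main obstacle is sufficiency: well-definedness of $\oplus$ on $\sim$-classes and verification of (E2)--(E4) lean on conditions (2) and (3) together, with (3) keeping $\sim$ stable under componentwise addition and (2) blocking spurious identifications between distinct atomic profiles. Necessity, by contrast, reduces to a single recurring move---splitting a defined $\oplus$-sum via (E2) and reading off partial sums as complements---so the bookkeeping is straightforward once set up.
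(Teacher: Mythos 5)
The paper itself offers no proof of Theorem~\ref{thm32} --- it is quoted from \cite{bkz23} --- so your attempt can only be judged on its own merits. Your necessity half is essentially complete and correct: viewing each row as an atomic decomposition of $1$, condition (1) follows from $1=a_k\oplus a_k'$ together with Theorem~\ref{thm21}; in (2) the split of the $r_j$-sum along $r_i-e_k$, the identification of $\bigoplus_\ell(r_i-e_k)_\ell a_\ell$ with $a_k'$ by uniqueness of complements, and the atom/cancellation argument forcing $v=e_k$ are all sound; in (3) the choice $u_\ell=\min\{(r_i)_\ell,(r_k)_\ell\}$ does make all three splittings legitimate, and recombining the complementary pair $c$ and $c'$ yields a defined atomic decomposition of $1$ with multiplicity vector $r_i+r_j-r_k$. (You use, without saying so, generalized associativity of partial sums, cancellation, and uniqueness of complements; these are standard and harmless.)

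The genuine gap is in sufficiency, which you yourself flag as ``the main obstacle'' and then leave as a plan. The quotient construction on $\Lambda$ with $v\sim w$ iff some $u$ completes both to rows is the right idea, and your transitivity argument via (3) is correct, but every remaining verification is asserted rather than carried out, and these are precisely the places where (2) and (3) must be invoked in specific ways: well-definedness of $[v]\oplus[w]=[v+w]$ across representatives (two successive applications of (3): from the rows $v'+u_1$, $v+w+t$, $v+u_1$ one gets the row $v'+w+t$, and then from $w'+u_2$, $v'+w+t$, $w+u_2$ the row $v'+w'+t$, giving a common witness $t$); axiom (E4) (if $v+r\le r'$ with $v\ne 0$ then $r'-e_k\ge r\ge 0$ for some $k$ and (2) forces $r'=r$, a contradiction); the fact that the $\sim$-class of $e_k$ is the singleton $\{e_k\}$ (again (2), applied to rows $e_k+u$ and $v+u$), which is what actually delivers atomicity of $[e_k]$, their distinctness, and that they exhaust the atoms --- your parenthetical only rules out $e_k\sim 0$, which is weaker; and, crucially, that $M(E)=M$, i.e.\ that any $x$ with $[x]=1$ is itself a row of $M$, which needs one more use of (3) (from $x+u$ and $r+u$ rows and $x+u+r\ge r+u$ one gets that $x$ is a row). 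All of these can be filled in, so the construction is viable, but as written the sufficiency direction is an outline, not a proof.
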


It follows from (2) that distinct rows of $M$ are different.

\begin{thm}    
\label{thm33}
$E$ is classical if and only if $M(E)$ has one row.
\end{thm}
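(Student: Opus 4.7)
The plan is to reduce both directions to the characterization of classicality provided by Theorem~\ref{thm25}: $E$ is classical iff $\bigoplus_{i=1}^m n(a_i)a_i = 1$. Under this characterization, $[n(a_1),\ldots,n(a_m)]$ is automatically an atomic row of $M(E)$, so the task becomes showing that classicality is equivalent to this being the only row.

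For the forward direction, assume $E$ is classical, so by Theorem~\ref{thm25} the tuple $N := [n(a_1),\ldots,n(a_m)]$ is a row of $M(E)$. Let $[x_1,\ldots,x_m]$ be any row; then $\bigoplus x_i a_i = 1$, and since each $x_i a_i$ occurs as a subsum it is defined, forcing $x_i\le n(a_i)$ by definition of $n(a_i)$. I would then use axioms (E1) and (E2) iteratively to split $\bigoplus n(a_i)a_i$ into the two defined pieces $\bigoplus x_i a_i = 1$ and $\bigoplus\paren{n(a_i)-x_i}a_i$, so that $1=1\oplus\bigoplus\paren{n(a_i)-x_i}a_i$. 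Axiom (E4) then forces $\bigoplus\paren{n(a_i)-x_i}a_i=0$; an easy induction on the number of summands shows a sum in an effect algebra equals $0$ only if every summand is $0$, and $ka_i=0$ with $a_i$ an atom forces $k=0$. Hence $x_i=n(a_i)$ for all $i$, so $N$ is the unique row of $M(E)$.

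For the reverse direction, suppose $M(E)$ consists of a single row $[y_1,\ldots,y_m]$, so in particular each $y_i\le n(a_i)$ because $y_ia_i$ is a defined subsum of $1$. Assume for contradiction that $y_k<n(a_k)$ for some $k$. Then $n(a_k)a_k$ is defined and $\le 1$, so by (E3) there is $c\in E$ with $n(a_k)a_k\oplus c=1$ (e.g.\ $c=\paren{n(a_k)a_k}'$). If $c\ne 0$ apply Theorem~\ref{thm21} to write $c=\bigoplus_j w_ja_j$ with $w_j\in\Natural_0$, and if $c=0$ take all $w_j=0$. Maximality of $n(a_k)$ rules out $w_k\ge 1$, since then $\paren{n(a_k)+1}a_k$ would be a defined subsum of $1$; thus $w_k=0$, and the tuple $[w_1,\ldots,w_{k-1},n(a_k),w_{k+1},\ldots,w_m]$ is an atomic row of $M(E)$ that differs from $[y_1,\ldots,y_m]$ at position $k$, contradicting uniqueness. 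Hence $y_k=n(a_k)$ for every $k$, so $\bigoplus n(a_i)a_i=1$ and Theorem~\ref{thm25} gives that $E$ is classical.

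The main obstacle is the bookkeeping in the forward direction: in a general effect algebra, splitting $\bigoplus n(a_i)a_i$ into $\bigoplus x_i a_i$ plus a remainder is not a one-line commutative-monoid calculation but requires iterated applications of (E1) and (E2). Once that regrouping is legitimate, axiom (E4) collapses the problem instantly, and the reverse direction is a short contradiction driven by the maximality property of $n(a_k)$.
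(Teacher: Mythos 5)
Your proof is correct and takes essentially the same route as the paper: both directions reduce to Theorem~\ref{thm25}, comparing an arbitrary atomic row with $[n(a_1)\,n(a_2)\cdots n(a_m)]$ and exploiting the maximality of each $n(a_i)$. The differences are cosmetic --- you invoke (E4) to force the remainder $\bigoplus\paren{n(a_i)-x_i}a_i$ to vanish where the paper instead derives the contradiction $\oplus x_ia_i<1$, and you make explicit the step $w_k=0$ (no extra copies of $a_k$ in the decomposition of $[n(a_k)a_k]'$) that the paper leaves implicit.
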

\begin{proof}
Suppose $E$ has atoms $\brac{a_1,a_2,\ldots ,a_m}$ and $M(E)=[x_1x_2\cdots x_m]$ has one row so $\bigoplus\limits _{i=1}^mx_ia_i=1$. Since $n(a_1)a_1$ is defined, we have by Theorem~\ref{thm21} that
\begin{equation*}
1=n(a_1)a_1\oplus [n(a_1)a_1]'=n(a_1)a_1\oplus y_2a_2\oplus\cdots\oplus y_na_n
\end{equation*}
Since $M(E)$ has one row, we conclude that $n(a_1)=x_1$. In a similar way, $n(a_i)=x_i$, $i=1,2,\ldots ,m$. Hence, $\oplus n(a_i)a_i=1$ so by Theorem~\ref{thm25}, $E$ is classical. Conversely, if $E$ is classical, then by Theorem~\ref{thm25}, we have $\oplus n(a_i)a_i=1$. Therefore, $M(E)$ has the row $[n(a_1)n(a_2)\cdots n(a_n)]$. If $M(E)$ has another row $[x_1x_2\cdots x_m]$, then $x_i<n(a_i)$ for some $i$. But then 
\begin{equation*}
\oplus x_ia_i<\oplus n(a_i)a_i=1
\end{equation*}
which is a contradiction. Hence, $M(E)$ has one row.
\end{proof}

We now list the classical effect algebras with 2 to 16 elements. In this table, the first column contains the number of elements $n$, the second column the number of atoms, the third gives the number of classical effect algbras with $n$ elements, the fourth gives the row representations and the fifth gives the factorizations of $n$. For example, the factorizations of 8 are $8,4\tbullet 2$ and $2\tbullet  2\tbullet  2$. After the table we show how these various quantities are obtained.
\vskip 2pc

{\hskip -3pc
\begin{tabular}{|c|c|c|c|c|c|}
\hline
Elements&Atoms&Classical&Rows&Factorizations\\
\hline
2&1&1&[1]&2\\
\hline
3&1&1&[2]&3\\
\hline
4&1,2&2&[3],[1\,1]&$4,2\tbullet 2$\\
\hline
5&1&1&[4]&5\\
\hline
6&1,2&2&[5],[1\,2]&$6,2\tbullet 3$\\
\hline
7&1&1&[6]&7\\
\hline
8&$1,2,3$&3&[7],[1\,3],[1\,1\,1]&$8,2\tbullet 4,2\tbullet 2\tbullet 2$\\
\hline
9&1,2&2&[8],[2\,2]&$9,3\tbullet 3$\\
\hline
10&1,2&2&[9],[1\,4]&$10,2\tbullet 5$\\
\hline
11&1&1&[10]&11\\
\hline
12&1,2,3&4&[11],[1\,5],[2\,3],[1\,1\,2]&$12,2\tbullet 6,3\tbullet 4,2\tbullet 2\tbullet 3$\\
\hline
13&1&1&[12]&13\\
\hline
14&1,2&2&[13],[1\,6]&$14,2\tbullet 7$\\
\hline
15&1,2&2&[14],[2\,4]&$15,3\tbullet 5$\\
\hline
16&$1,2,3,4$&5&[15],[3\,3],[1\,1\,1],[1\,1\,3],[1\,1\,1\,1]&$16,4\tbullet 4,2\tbullet 8,2\tbullet 2\tbullet 4,2\tbullet 2\tbullet 2\tbullet 2$\\
\hline
\noalign{\smallskip}
\end{tabular}}
\vskip 2pc

If we have $n$ effects in $E$, we consider all the factorizations of $n$. We always have the scale effect algebra $[n-1]$ with one atom. This corresponds to the factorization $n=n$. The factorizations with two factors $n=x_1\tbullet x_2$ corresponds to 2 atoms and classical effect algebras $[x_1-1\ x_2-1]$. The factorizations with three factors $n=x_1\tbullet x_2\tbullet x_3$ correspond to 3 atoms and classical effect algebras $[x_1-1\ x_2-1\ x_3-1]$. Continuing, the factorizations $n=x_1\tbullet x_2\cdots\tbullet  x_m$ correspond to $m$ atoms with classical effect algebras $[x_1-1\ x_2-1\cdots x_m-1]$. As a corollary to these observations, we have 

\noindent (1)\enspace There is exactly one classical effect algebra with $n$ elements if and only if $n$ is prime.\newline
(2)\enspace If $n$ has prime factorization $n=n_1n_2\cdots n_m$, then there exists a classical effect algebra with $n$ elements and $i$ atoms for $i=1,2,\ldots ,m$.

\begin{example}  
The integer 36 has a large number of factorizations
\begin{align*}
36&=2\tbullet 18=3\tbullet 12=4\tbullet 9=6\tbullet 6=2\tbullet 2\tbullet 9=4\tbullet 3\tbullet 3\\
   &=2\tbullet 3\tbullet 6=2\tbullet 2\tbullet 3\tbullet 3
\end{align*}
We conclude that there are nine classical effect algebras with 36 elements. One has 1 atom, four have 2 atoms, three have 3 atoms and one has 4 atoms. The row representations of these effect algebras are:
\begin{equation*}
[35],[1\ 17],[2\ 11],[3\ 8],[5\ 5],[1\ 1\ 18],[2\ 2\ 3],[1\ 2\ 5],[1\ 1\ 2\ 2]\qquad\square
\end{equation*}
\end{example}

We have stated that if $E$ is classical with $n$ elements and $n$ has factorization $n=n_1n_2\cdots n_m$, then $E$ has $m$ atoms and matrix representation $M=[n_1-1\ n_2-1\cdots n_m-1]$. We now justify this statement. First, since $n_i\ge 2$, all the rows and columns of $M$ have at least one nonzero entry so Condition~(1) of Theorem~\ref{thm32} holds. Since there is only one row in $M$, Condition~(2) and (3) of Theorem~\ref{thm32} hold. Hence, $M$ is a matrix representation of a classical effect algebra with $n$ elements and $m$ atoms. Conversely, if $M(E)=[n_1-1\ n_2-1\cdots n_m-1]$ where $n_i\ge 2$, $i=1,2,\ldots ,m$, then there are $m$ atoms $a_1,a_2,\ldots ,a_m$ in $E$. Since every element of $E$ has the form $\oplus x_ia_i$, $0\le x_i\le n_i-1$, we conclude that there are $n=n_1n_2\cdots , n_m$ elements in $E$.

\section{Sum Tables}  
We now consider sum tables for classical effect algebras \cite{gt24}. Although the matrix representation is a concise way of displaying an effect algbra $E$, it does not easily identify the sums which are the basic operations of $E$. This is accomplished using sum tables. Since the sums $a\oplus 0=a$ and $a\oplus 1$ is not defined for $a\ne 0$, we need not include these trivial sums in the table. We use the symbol $N$ to specify that $a\oplus b$ is not defined.

The sum table for the 3-element classical effect algebra $[2]=\brac{0,a,1}$ is
\medskip

\begin{tabular}{c|c|c|c}
$+$&$a$\\
\hline
$a$&1\\
\hline
\end{tabular}\,.
\medskip

\setlength{\parindent}{0pt} 

This shows that $2a=a\oplus a=1$ which is the only nontrivial sum. The sum table for the 4-element classical effect algebra $[3]=\brac{0,a,b,1}$ where $3a=1$ becomes
\medskip

\begin{tabular}{c|c|c|c|c|}
$+$&$a$&$b$\\
\hline
$a$&$b$&1\\
\hline
$b$&1&$N$\\
\hline
\end{tabular}
\medskip

This shows that $a\oplus a=b$, $a\oplus b=b\oplus a=1$ and $b\oplus b$ is not defined. The sum table for the 4-element classical effect algebra
$[1\, 1]=\brac{0,a,b,1}$ where $a\oplus b=1$ becomes
\medskip

\begin{tabular}{c|c|c|c}
$+$&$a$&$b$\\
\hline
$a$&$N$&1\\
\hline
$b$&1&$N$\\
\hline
\end{tabular}\,. 
\medskip

This shows that $a\oplus b=b\oplus a=1$ and $a\oplus a$, $b\oplus b$ are not defined.
We now display the sum tables for the classical effect algebras with 5 to 8 elements.

5-elements: $[4]=\brac{0,a,b,c,1}, 4a=1$
\medskip

\begin{tabular}{c|c|c|c|c}
$+$&$a$&$b$&$c$\\
\hline
$a$&$b$&$c$&1\\
\hline
$b$&$c$&1&$N$\\
\hline
$c$&1&$N$&$N$\\
\hline
\end{tabular}\,. 
\medskip

6-elements: $[5]=\brac{0,a,b,c,d,1}, 5a=1$
\medskip

\begin{tabular}{c|c|c|c|c|c}
$+$&$a$&$b$&$c$&$d$\\
\hline
$a$&$b$&$c$&$d$&1\\
\hline
$b$&$c$&$d$&1&$N$\\
\hline
$c$&$d$&1&$N$&$N$\\
\hline
$d$&1&$N$&$N$&$N$\\
\hline
\end{tabular}\,. 
\medskip

6-elements: $[1\ 2]=\brac{0,a,b,c,d,1}, a\oplus 2b=1$
\medskip

\begin{tabular}{c|c|c|c|c|c}
$+$&$a$&$b$&$c$&$d$\\
\hline
$a$&$N$&$c$&$N$&1\\
\hline
$b$&$c$&$d$&1&$N$\\
\hline
$c$&$N$&1&$N$&$N$\\
\hline
$d$&1&$N$&$N$&$N$\\
\hline
\end{tabular}\,. 
\medskip

7-elements: $[6]=\brac{0,a,b,c,d,e,1}, 6a=1$
\medskip

\begin{tabular}{c|c|c|c|c|c|c}
$+$&$a$&$b$&$c$&$d$&$e$\\
\hline
$a$&$b$&$c$&$d$&e&1\\
\hline
$b$&$c$&$d$&$e$&1&$N$\\
\hline
$c$&$d$&$e$&1&$N$&$N$\\
\hline
$d$&$e$&1&$N$&$N$&$N$\\
\hline
$e$&1&$N$&$N$&$N$&$N$\\
\hline
\end{tabular}\,. 
\bigskip

8-elements: $[7]=\brac{0,a,b,c,d,e,f,1}, 7a=1$
\medskip

\begin{tabular}{c|c|c|c|c|c|c|c}
$+$&$a$&$b$&$c$&$d$&$e$&$f$\\
\hline
$a$&$b$&$c$&$d$&e&$f$&1\\
\hline
$b$&$c$&$d$&$e$&$f$&1&$N$\\
\hline
$c$&$d$&$e$&$f$&1&$N$&$N$\\
\hline
$d$&$e$&$f$&1&$N$&$N$&$N$\\
\hline
$e$&$f$&1&$N$&$N$&$N$&$N$\\
\hline
$f$&1&$N$&$N$&$N$&$N$&$N$\\
\hline
\end{tabular}\,. 
\bigskip

8-elements: $[1\ 3]=\brac{0,a,b,c,d,e,f,1}, a\oplus 3b=1$
\medskip

\begin{tabular}{c|c|c|c|c|c|c|c}
$+$&$a$&$b$&$c$&$d$&$e$&$f$\\
\hline
$a$&$N$&$c$&$N$&e&$f$&1\\
\hline
$b$&$c$&$d$&$e$&$f$&1&$N$\\
\hline
$c$&$N$&$e$&$N$&1&$N$&$N$\\
\hline
$d$&$e$&$f$&1&$N$&$N$&$N$\\
\hline
$e$&$f$&1&$N$&$N$&$N$&$N$\\
\hline
$f$&1&$N$&$N$&$N$&$N$&$N$\\
\hline
\end{tabular}\,. 
\medskip

8-elements: $[1\ 1\ 1]=\brac{0,a,b,c,d,e,f,1}, a\oplus b\oplus c=1$
\medskip

\begin{tabular}{c|c|c|c|c|c|c|c}
$+$&$a$&$b$&$c$&$d$&$e$&$f$\\
\hline
$a$&$N$&$d$&$e$&N&$N$&1\\
\hline
$b$&$d$&$N$&$f$&$N$&1&$N$\\
\hline
$c$&$e$&$f$&$N$&1&$N$&$N$\\
\hline
$d$&$N$&$N$&1&$N$&$N$&$N$\\
\hline
$e$&$N$&1&$N$&$N$&$N$&$N$\\
\hline
$f$&1&$N$&$N$&$N$&$N$&$N$\\
\hline
\end{tabular}\,. 
\medskip

\begin{example}  
Of course, any finite effect algebra has a sum table. For example, the effect algebra $E=\brac{0,a,b,c,1}$ with matrix representation
$\begin{bmatrix}1&0&1\\0&2&0\end{bmatrix}$ so that $a\oplus c=1$, $2b=1$ has sum table
\medskip

\begin{tabular}{c|c|c|c|c|c|}
$+$&$a$&$b$&$c$\\
\hline
$a$&$N$&$N$&1\\
\hline
$b$&$N$&1&$N$\\
\hline
$c$&1&$N$&$N$\\
\hline
\end{tabular}
\medskip
\hfill\qedsymbol
\end{example}

\setlength{\parindent}{20pt}

\section{States on Classical Effect Algebras}  
A \textit{state} on an effect algebra $E$ is a function $s\colon E\to [0,1]\subseteq\real$ such that $s(1)=1$ and $s(a\oplus b)=s(a)+s(b)$ whenever $a\perp b$. A state describes the initial condition of a physical system corresponding to $E$ and we denote the set of states on $E$ by $\sscript (E)$. If $a\in E$, $s\in\sscript (E)$, then $s(a)$ is interpreted as the probability that $a$ occurs when $a$ is measured and the system is in state $s$. A set of states $S\subseteq\sscript (E)$ in \textit{order-determing} if $s(a)\le s(b)$ for all $s\in S$ implies that $a\le b$. If $s_i\in\sscript (E)$, $i=1,2,\ldots ,n$, and $\lambda _i\in [0,1]\le\real$ with $\sum\lambda _i=1$, then $\sum\lambda _is_i\in\sscript (E)$ is called a \textit{convex combination} of $\brac{s_1,s_2,\ldots ,s_n}$.

\begin{thm}    
\label{thm51}
{\rm{(i)}}\enspace If $E$ is a classical effect algebra with $m$ atoms, the there exists a set of order-determining states $S=\brac{s_1,s_2,\ldots ,s_m}$ on $E$ such that every state $s\in\sscript (E)$ is a convex combination of states in $S$.
{\rm{(ii)}}\enspace A classical effect algebra is a quantum effect algebra.
\end{thm}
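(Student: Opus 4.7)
The plan is to exploit the canonical normal form for classical effect algebras coming from Theorems~\ref{thm25} and~\ref{thm33}. Suppose $E$ has atoms $a_1,\ldots ,a_m$. Theorem~\ref{thm25} gives $n(a_1)a_1\oplus\cdots\oplus n(a_m)a_m=1$, and the counting argument at the end of Section~3 shows $\ab{E}=\prod_{i=1}^m(n(a_i)+1)$. Combined with Theorem~\ref{thm21}, every $b\in E$ has a \emph{unique} representation $b=x_1a_1\oplus\cdots\oplus x_ma_m$ with $0\le x_i\le n(a_i)$. Using this normal form I define candidate states $s_i\colon E\to [0,1]$ by $s_i(b)=x_i/n(a_i)$.

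For part (i), I first verify that each $s_i$ is a state: well-definedness comes from uniqueness of the normal form; $s_i(1)=n(a_i)/n(a_i)=1$; and if $b\perp c$ with $b=\oplus_j x_ja_j$ and $c=\oplus_j y_ja_j$, then $b\oplus c$ has normal form $\oplus_j(x_j+y_j)a_j$, giving additivity. To show $S=\brac{s_1,\ldots ,s_m}$ is order-determining, I note that $s_i(b)\le s_i(c)$ for every $i$ forces $x_i\le y_i$ componentwise; the element $d=\oplus_i(y_i-x_i)a_i$ is then a legitimate subsum of the generating observable from Theorem~\ref{thm25}, and $b\oplus d=c$ gives $b\le c$. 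For the convex-combination claim, given $s\in\sscript (E)$ I set $\lambda _i=n(a_i)s(a_i)$; these lie in $[0,1]$ because $n(a_i)s(a_i)=s(n(a_i)a_i)\le s(1)=1$, and $\sum\lambda _i=s(\oplus _in(a_i)a_i)=s(1)=1$. A direct computation on $b=\oplus_jx_ja_j$ then yields $\sum_i\lambda_is_i(b)=\sum_jx_js(a_j)=s(b)$.

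For part (ii), I use the states from (i) to construct an embedding $f\colon E\to E(\mathbb{C}^m)$ by $f(b)=\mathrm{diag}(s_1(b),\ldots ,s_m(b))$. The diagonal entries lie in $[0,1]$, so $0\le f(b)\le I$. Additivity of $f$ and the identity $f(1)=I$ follow at once from (i), so $f$ is a morphism. For the monomorphism property, if $f(b)\perp f(c)$ in $E(\mathbb{C}^m)$, then $s_i(b)+s_i(c)\le 1$ for every $i$, which in normal form reads $x_i+y_i\le n(a_i)$; by Theorem~\ref{thm25} the element $\oplus_i(x_i+y_i)a_i$ then belongs to $E$, so $b\perp c$ in $E$ and $f(b\oplus c)=f(b)+f(c)$. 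Injectivity of $f$ is immediate from the order-determining property established in (i), so $f$ is a monomorphism into the standard quantum effect algebra.

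The main obstacle is really the uniqueness of the normal form $b=\oplus_ix_ia_i$ in the classical case, because every subsequent step (well-definedness of $s_i$, the formula for $\lambda_i$, and injectivity of $f$) rests on it. Surjectivity of the representation follows from Theorem~\ref{thm21} together with the fact that classicality forces all bounded sums $\oplus_ix_ia_i$ with $x_i\le n(a_i)$ to be defined, and uniqueness then follows by matching the resulting count $\prod_i(n(a_i)+1)$ against the cardinality formula derived at the end of Section~3. Once this is settled, the remainder of the proof is essentially bookkeeping with the coefficients $x_i$.
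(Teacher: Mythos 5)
Your proposal is correct, and part (i) is essentially the paper's own argument: the paper writes $M(E)=[x_1-1\ \cdots\ x_m-1]$, so its $x_i-1$ is your $n(a_i)$, and it defines exactly your states $s_i(b)=y_i/(x_i-1)$, proves order-determination the same way, and uses the same weights $\lambda_i=(x_i-1)s(a_i)$ for the convex decomposition. Where you genuinely diverge is part (ii): the paper simply invokes the result of \cite{gt24} that a finite effect algebra is a quantum effect algebra if and only if it admits an order-determining set of states, and then applies (i); you instead build the embedding explicitly, sending $b$ to $\mathrm{diag}\paren{s_1(b),\ldots,s_m(b)}$ in $E(\mathbb{C}^m)$ and checking the monomorphism property via $x_i+y_i\le n(a_i)$ together with Theorem~\ref{thm25} (the regrouping of the defined sum $\oplus_i(x_i+y_i)a_i$ into $b\oplus c$ is legitimate by generalized associativity). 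The paper's route is shorter but leans on a cited theorem; yours is self-contained, re-proves the needed direction of that theorem in this special case, and exhibits a concrete $m$-dimensional commuting (diagonal) model, which is a nice bonus consistent with classicality. One remark on the point you flag as the main obstacle: uniqueness of the normal form $b=\oplus_i x_ia_i$ with $0\le x_i\le n(a_i)$ is indeed used, but the paper uses it just as silently in (i) when it defines $s_i$ from the representation \eqref{eq51}; your appeal to the cardinality count $\ab{E}=\prod_i\paren{n(a_i)+1}$ from the end of Section~3 is at the same level of rigor as the paper's own discussion (that count itself presupposes distinct coefficient tuples give distinct effects), so your proof is no less complete than the published one, and flagging the issue explicitly is a point in its favor.
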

\begin{proof}
(i)\enspace Since $E$ has $m$ atoms $a_1,a_2,\ldots ,a_m$ we have $M(E)=[x_1-1\ x_2-1\ldots x_m-1]$ where $x_i\in\Natural$, $x_i\ge 2$ and $\bigoplus\limits _{i=1}^m(x_i-1)a_i=1$. Also, every $a\in E$ has the form
\begin{equation}                
\label{eq51}
a=\bigoplus _{i=1}^my_ia_i,\quad y_i\in\Natural _0,\quad 0\le y_i\le x_i-1
\end{equation}
Let $s_i\colon\brac{a_1,a_2,\ldots ,a_m}\to [0,1]\subseteq\real$ be the function $s_i(a_j)=\tfrac{1}{x_i-1}\,\delta _{ij}$,  where $\delta _{ij}$ is the Kronecker delta. If $a\in E$ has the form \eqref{eq51}, define 
\begin{equation*}
s_i(a)=\sum _{j=1}^my_is_i (a_j)=\tfrac{y_i}{x_i-1}\,,\quad i=1,2,\ldots ,m
\end{equation*}
We then have
\begin{equation*}
s_i(1)=\sum _{j=1}^m(x_i-1)s_i(a_j)=\tfrac{x_i-1}{x_i-1}=1,\quad i=1,2,\ldots ,m
\end{equation*}
If $a$ has the form \eqref{eq51} and $b=\bigoplus\limits _{i=1}^my'_ia_i$ with $a\perp b$, then
$a\oplus b=\bigoplus\limits _{i=1}^m(y_i+y'_i)a_i$ and 
\begin{equation*}
s_i(a\oplus b)=\tfrac{(y_i+y'_i)}{x_i-1}=s_i(a)+s_i(b),\quad i=1,2,\ldots ,m
\end{equation*}
Hence, $s_i\colon E\to [0,1]$ is a state, $i=1,2,\ldots ,m$. If $s_i(a)\le s_i(b)$, $i=1,2,\ldots ,m$ and $a,b$ have the previous forms, then
$\tfrac{y_i}{x_i-1}\le\tfrac{y'_i}{x_i-1}$ so $y_i\le y'_i$, $i=1,2,\ldots ,m$. It follows that $S=\brac{s_1,s_2,\ldots ,s_m}$ is an order-determining set of states. We now show that every $s\in\sscript (E)$ is a convex combination of states in $S$. Let $s\in\sscript (E)$ and define
$\lambda _i=(x_i-1)s(a_i)$, $i=1,2,\ldots ,m$. Then
\begin{equation*}
\sum _{i=1}^m\lambda _i=\sum _{i=1}^m(x_i-1)s(a_i)=s\sqbrac{\bigoplus _{i=1}^m(x_i-1)a_i}=s(1)=1
\end{equation*}
If $a$ has the form \eqref{eq51}, then
\begin{equation*}
s(a)=\sum _{i=1}^my_is(a_i)=\sum _{i=1}^my_i\tfrac{\lambda _i}{x_i-1}=\sum _{i=1}^m\lambda _is_i(a)
   =\paren{\sum _{i=1}^m\lambda _is_i}(a)
\end{equation*}
Therefore, $s=\sum\limits _{i=1}^m\lambda _is_i$ so $s$ is a convex combination of states in $S$.
(ii)\enspace It is shown in \cite{gt24} that a finite effect algebra is a quantum effect algebra if and only if it has an order-determining set of states. Applying (i) it follows that a classical effect algebra is quantum.
\end{proof}

More details concerning quantum effect algebras are given in \cite{gt24}. The following table gives the number of classical and quantum effect algebras with 2 to 6 elements.
\bigskip

{\hskip -3pc
\begin{tabular}{|c|c|c|c|c|c|c|}
\hline
Number of&Number of&Classical&Quantum&Non-Quantum\\
Elements&Effect Algebras&&Not Classical&\\
\hline
2&1&1&0&0\\
\hline
3&1&1&0&0\\
\hline
4&3&2&0&1\\
\hline
5&4&1&1&2\\
\hline
6&10&2&2&6\\
\hline
\noalign{\smallskip}
\end{tabular}}
\vskip 1pc

We have already displayed the classical effect algebras with 2 to 6 elements. We now display the others. The non-quantum effect algebra with 4 elements is $\begin{bmatrix}2&0\\0&2\end{bmatrix}$
\medskip

\noindent The quantum but not classical effect algebra with 5 elements has the sum table and matrix representation
\medskip

\begin{tabular}{c|c|c|c|c|c|}
$+$&$a$&$b$&$c$\\
\hline
$a$&$N$&1&$N$\\
\hline
$b$&1&$N$&$N$\\
\hline
$c$&$N$&$N$&1\\
\hline
\end{tabular}\qquad
$\begin{bmatrix}1&1&0\\0&0&2\end{bmatrix}$ 
\medskip

\noindent The two quantum but not classical effect algebras with 6 elements have the sum tables and matrix representations
\medskip

\noindent
\begin{tabular}{c|c|c|c|c|c}
$+$&$a$&$b$&$c$&$d$\\
\hline
$a$&$N$&1&$N$&$N$\\
\hline
$b$&1&$N$&$N$&$N$\\
\hline
$c$&$N$&$N$&$N$&1\\
\hline
$d$&$N$&$N$&1&$N$\\
\hline
\end{tabular}
$\begin{bmatrix}1&1&0&0\\0&0&1&1\end{bmatrix}$
\begin{tabular}{c|c|c|c|c|c}
$+$&$a$&$b$&$c$&$d$\\
\hline
$a$&$b$&1&$N$&$N$\\
\hline
$b$&1&$N$&$N$&$N$\\
\hline
$c$&$N$&$N$&$N$&1\\
\hline
$d$&$N$&$N$&1&$N$\\
\hline
\end{tabular}
$\begin{bmatrix}3&\!0&\!0\\0&\!1&\!1\end{bmatrix}$ atoms $a,c,d$
\medskip

\noindent The six non-quantum effect algebras with 6 elements have matrix representations
\medskip

\noindent
$\begin{bmatrix}1&2\\3&0\end{bmatrix}$,\ $\begin{bmatrix}3&0\\0&3\end{bmatrix}$,\ $\begin{bmatrix}2&0\\0&4\end{bmatrix}$,\ 
$\begin{bmatrix}3&0&0\\0&2&0\\0&0&2\end{bmatrix}$,\ 
$\begin{bmatrix}1&1&0&0\\0&0&2&0\\0&0&0&2\end{bmatrix}$,\ 
$\begin{bmatrix}2&0&0&0\\0&2&0&0\\0&0&2&0\\0&0&0&2\end{bmatrix}$
\bigskip

\section{Composite Effect Algebras}  
If $E$ and $F$ are effect algebras, consider their cartesian product $E\times F=\brac{(a,b)\colon a\in E,b\in F}$. It is shown \cite{gt24} that $E\times F$ is an effect algebra if we define $0=(0,0)$, $1=(1,1)$, $(a,b)'=(a',b')$, $(a_1,b_1)\perp (a_2,b_2)$ when $a_1\perp a_2$,
$b_1\perp b_2$ and in this
\begin{equation*}
(a_1,b_1)\oplus (a_2,b_2)=(a_1\oplus a_2,b_1\oplus b_2)
\end{equation*}
We then call $E\times F$ the \textit{composite} of $E$ and $F$. We think of the composite $E\times F$ as the effect algebra describing a combination of two physical systems whose effect algebras are $E$ and $F$, respectively. It is clear that if $E$ has $m$ elements and $F$ has $n$ elements, then $E\times F$ has $mn$ elements.

\begin{thm}    
\label{thm61}
{\rm{(i)}}\enspace For $(a_1,b_1),(a_2,b_2)\in E\times F$, we have $(a_1,b_1)\le (a_2,b_2)$ if and only if $a_1\le a_2$, $b_1\le b_2$.
{\rm{(ii)}}\enspace An effect $d\in E\times F$ is an atom if and only if $d=(a,0)$ or $d=(0,b)$ where $a$ is an atom of $E$ and $b$ is an atom of $F$.
\end{thm}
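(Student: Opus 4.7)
My plan is to reduce both parts to the componentwise definitions of $\oplus$, $0$, and $\perp$ on $E\times F$, and then use part (i) inside part (ii) to control what elements can lie below a candidate atom.

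For (i), I unfold the definition of $\le$. If $(a_1,b_1)\le (a_2,b_2)$, there exists $(c,d)\in E\times F$ with $(a_1,b_1)\oplus (c,d)=(a_2,b_2)$. By the componentwise definitions of $\perp$ and $\oplus$ this forces $a_1\perp c$, $b_1\perp d$, $a_2=a_1\oplus c$, and $b_2=b_1\oplus d$, which gives $a_1\le a_2$ and $b_1\le b_2$. Conversely, if $a_1\le a_2$ and $b_1\le b_2$, pick $c\in E$, $d\in F$ with $a_2=a_1\oplus c$ and $b_2=b_1\oplus d$; then $a_1\perp c$ and $b_1\perp d$, so $(a_1,b_1)\perp (c,d)$ and $(a_1,b_1)\oplus (c,d)=(a_2,b_2)$.

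For (ii), I first prove the easy direction. Suppose $a$ is an atom of $E$ and consider $d=(a,0)$. Using part (i), any $(c_1,c_2)\le (a,0)$ satisfies $c_1\le a$ and $c_2\le 0$, so $c_2=0$ and $c_1\in\{0,a\}$ since $a$ is an atom; thus $(c_1,c_2)$ is either $(0,0)=0$ or $(a,0)=d$, so $d$ is an atom. The case $d=(0,b)$ with $b$ an atom of $F$ is symmetric.

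The harder direction is the converse, which needs both parts. Suppose $d=(a,b)$ is an atom; then $d\ne 0=(0,0)$, so $a\ne 0$ or $b\ne 0$. I claim we cannot have both $a\ne 0$ and $b\ne 0$: if both are nonzero, then by part (i), $(a,0)\le (a,b)$; but $(a,0)\ne 0$ since $a\ne 0$ and $(a,0)\ne (a,b)$ since $b\ne 0$, contradicting the atomicity of $d$. So exactly one coordinate is nonzero; without loss of generality $b=0$ and $a\ne 0$, so $d=(a,0)$. To see that $a$ is an atom of $E$, suppose $c\in E$ with $c\le a$; by part (i) again, $(c,0)\le (a,0)=d$, so by atomicity of $d$ we get $(c,0)=0$ or $(c,0)=(a,0)$, i.e.\ $c=0$ or $c=a$. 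The symmetric argument covers the case $a=0$, completing the proof.

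The only place that requires any care is the converse direction of (ii), where the main obstacle is ruling out atoms with both coordinates nonzero; this is handled by exhibiting the intermediate element $(a,0)$ (or $(0,b)$) strictly between $0$ and $(a,b)$, which is made legitimate by part (i).
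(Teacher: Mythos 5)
Your proof is correct and follows essentially the same route as the paper: part (i) by componentwise unfolding of $\oplus$ and $\le$, and part (ii) by using (i) to analyze elements below $(a,0)$ and by exhibiting the intermediate element $(a,0)$ below $(a,b)$ to rule out atoms with both coordinates nonzero. The only difference is cosmetic — you rule out the two-nonzero-coordinate case before proving the surviving coordinate is an atom, while the paper does these two steps in the opposite order.
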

\begin{proof}
(i)\enspace If $a_1\le a_2$, $b_1\le b_2$ there exists $c\in E$, $d\in F$ such that $a_2=a_1\oplus c$ and $b_2=b_1\oplus d$. Then
\begin{equation*}
(a_2,b_2)=(a_1\oplus c,b_1\oplus d)=(a_1,b_1)\oplus (c,d)
\end{equation*}
so $(a_1,b_1)\le (a_2,b_2)$. Conversely, if $(a_1,b_1)\le (a_2,b_2)$ then there exists $(c,d)\in E\times F$ such that
\begin{equation*}
(a_2,b_2)=(a_1,b_1)\oplus (c,d)=(a_1\oplus c,b_1\oplus d)
\end{equation*}
Hence, $a_2=a_1\oplus c$, $b_2=b_1\oplus d$ so $a_1\le a_2$, $b_1\le b_2$.
(ii)\enspace If $d=(a,0)$ where $a$ is an atom and $e=(e_1,e_2)\le d$, then by (i) $e_1\le a$ and $e_2=0$. Hence, $e_1=0$ or $e_1=a$ so $e=(0,0)$ or $e=(a,0)=d$. Thus, $d$ is an atom. If $d=(0,b)$ where $b$ is an atom, then in a similar way $d$ is an atom. Conversely, suppose $d=(a,b)$ is an atom in $E\times F$ where $a\ne 0$. If $e\le a$, then by (i) $(e,b)\le (a,b)$ so $(e,b)=0$ or $(e,b)=(a,b)$. If
$(e,b)=0$ then $e=0$ and otherwise $e=a$. Therefore, $a$ is an atom. Now if $b\ne 0$, then $0\ne (a,0)\le (a,b)$ but $(a,0)\ne (a,b)$ which contradicts the fact that $(a,b)$ is an atom. Hence, $b=0$ and $d=(a,0)$ where $a$ is an atom. In a similar way, if $(a,b)$ is an atom where $b\ne 0$ then $a=0$ and $b$ is an atom.
\end{proof}

Let $A\in\rmob(E)$, $B\in\rmob (F)$ with $A=\brac{a_1,a_2,\ldots ,a_m}$, $B=\brac{b_1,b_2,\ldots ,b_n}$. We define the 
\textit{product observable} $A\times B\in\rmob (E\times F)$ by
\begin{equation*}
A\times B=\brac{(a_1,0),(a_2,0),\ldots ,(a_m,0),(0,b_1),(0,b_2),\ldots ,(0,b_n)}
\end{equation*}
We see that $A\times B$ has $m+n$ effects. An observable $C\in\rmob (E\times F)$ is called a \textit{mixed} observable if it is not a product observable. We can combine $A$ and $B$ in various ways to obtain many mixed observables. For example, if $m\le n$, then
$C\in\rmob (E\times F)$ given by
\begin{equation}                
\label{eq61}
C=\brac{(a_1,b_1),(a_2,b_2),\ldots ,(a_m.b_m),(0,b_{m+1}),\ldots ,(0,b_n)}
\end{equation}
is a mixed observable. To show that $C$ is indeed an observable, the sum of its effects is 
\begin{align*}
&(a_1,b_1)\oplus (a_2,b_2)\oplus\cdots\oplus (a_m,b_m)\oplus (0,b_{m+1})\oplus\cdots\oplus (0,b_n)\\
   &=(a_1\oplus a_2\oplus\cdots\oplus a_m,b_1\oplus b_2\oplus\cdots\oplus b_m)\oplus (0,b_{m+1}\oplus b_{m+2}\oplus\cdots\oplus b_n)\\
   &=(1,b_1\oplus b_2\oplus\cdots\oplus b_n)=(1,1)
\end{align*}
We can replace the $(a_i,b_i)$ effects in $C$ with effects $(a_{i_j},b_{i_k})$ to obtain other mixed observables. Notice that $C$ has only $n$ effects. We now show that if $C\in\rmob (E\times F)$, then there exists a product observable $A\times B$, $A\in\rmob(E)$, $B\in\rmob (F)$ such that $C=A\times B$. We can assume that $C$ has the form \eqref{eq61}. Let $A=\brac{a_1,\ldots ,a_m,0,0,\ldots ,0}$ where there are
$(n-m)$ 0's and let $B=\brac{b_1,b_2,\ldots ,b_n}$. Then
\begin{equation*}
A\times B=\brac{(a_1,0),(a_2,0),\ldots ,(a_m,0), (0,0),\ldots ,(0,0),(0,b_1),(0,b_2),\ldots ,(0,b_n)}
\end{equation*}
and we have $C=A\times B$.

\begin{thm}    
\label{thm62}
Two effect algebras $E,F$ are classical if and only if $E\times F$ is classical.
\end{thm}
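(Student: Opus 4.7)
The cleanest route is via the atom-sum characterization in Theorem~\ref{thm25}, combined with the description of atoms of the composite given in Theorem~\ref{thm61}(ii). My plan is to reduce the problem to a coordinate-wise computation on the two factors.

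First, I would enumerate the atoms of $E\times F$. By Theorem~\ref{thm61}(ii), these are exactly the elements of the form $(a_i,0)$ with $a_i$ an atom of $E$, together with $(0,b_j)$ with $b_j$ an atom of $F$. Next, I would observe that the integer $n(\cdot)$ from Theorem~\ref{thm25} is preserved in both directions: since $k(a_i,0)=(ka_i,0)$ by the coordinate-wise definition of $\oplus$, the sum $k(a_i,0)$ is defined in $E\times F$ if and only if $ka_i$ is defined in $E$, so $n\bigl((a_i,0)\bigr)=n(a_i)$; similarly $n\bigl((0,b_j)\bigr)=n(b_j)$.

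Now I would apply Theorem~\ref{thm25} to $E\times F$. Listing the atoms as above and summing coordinate-wise gives
\begin{equation*}
\bigoplus_i n(a_i)(a_i,0)\,\oplus\,\bigoplus_j n(b_j)(0,b_j)
=\paren{\bigoplus_i n(a_i)a_i,\ \bigoplus_j n(b_j)b_j}.
\end{equation*}
Since $1_{E\times F}=(1,1)$, and since an equality $(u,v)=(1,1)$ in $E\times F$ holds if and only if both $u=1$ in $E$ and $v=1$ in $F$, the expression above equals $1_{E\times F}$ if and only if $\bigoplus_i n(a_i)a_i=1$ in $E$ and $\bigoplus_j n(b_j)b_j=1$ in $F$. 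Applying Theorem~\ref{thm25} to each factor, this is precisely the statement that $E$ and $F$ are both classical. The equivalence in Theorem~\ref{thm62} follows immediately.

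There is no substantial obstacle here; the only point requiring a small verification is the claim $n\bigl((a_i,0)\bigr)=n(a_i)$ (and its analogue), which is what lets the computation decouple into the two factors. Everything else is bookkeeping, and the argument is genuinely an ``if and only if'' in one pass, so the two directions of the theorem need not be handled separately.
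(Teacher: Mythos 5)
Your proof is correct, but it follows a genuinely different route from the paper. The paper works directly with generating observables (the characterization in Theorem~\ref{thm24}(iii)): for the forward direction it forms the product observable $A\times B$ from generating observables $A,B$ and notes $(a,b)=(a,0)\oplus(0,b)$; for the converse it takes a generating observable $C=\brac{(a_1,b_1),\ldots,(a_n,b_n)}$ for $E\times F$ and projects it coordinatewise to obtain generating observables $\brac{a_1,\ldots,a_n}$ and $\brac{b_1,\ldots,b_n}$ for $E$ and $F$. You instead combine the numerical atom criterion of Theorem~\ref{thm25} with the description of the atoms of $E\times F$ in Theorem~\ref{thm61}(ii), reducing everything to the identity $n\bigl((a_i,0)\bigr)=n(a_i)$, $n\bigl((0,b_j)\bigr)=n(b_j)$ and a coordinatewise sum. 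What your approach buys is a single biconditional chain with no separate directions, at the cost of invoking Theorem~\ref{thm61}(ii), which the paper's proof of this theorem does not need; the paper's argument is more self-contained but must handle the two implications separately. One small point you gloss over: the displayed identity and the ``$=(1,1)$ iff both coordinates equal $1$'' step tacitly require that a multi-term sum in $E\times F$ is \emph{defined} exactly when both coordinate sums are defined (and then equals the pair of coordinate sums); this does follow by induction from the coordinatewise definition of $\perp$ and $\oplus$ in the composite, so it is only bookkeeping, but it is the hinge on which your one-pass equivalence turns and deserves an explicit sentence.
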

\begin{proof}
If $E,F$ are classical, there exist generating observables $A\in\rmob (E)$, $B\in\rmob (F)$. Let $C=A\times B\in\rmob (E\times F)$. Since any $(a,b)\in E\times F$ has the form $(a,b)=(a,0)\oplus (0,b)$ it is clear that $C$ is a generating observable for $E\times F$. Hence,
$E\times F$ is classical. Conversely, suppose $E\times F$ is classical with generating observable $C=\brac{(a_1,b_1),\ldots ,(a_n,b_n)}$. Letting $A=\brac{a_1,a_2,\ldots ,a_n}$, $B=\brac{b_1,b_2,\ldots ,b_n}$ we have 
\begin{equation*}
(a_1\oplus a_2\oplus\cdots\oplus a_n,b_1\oplus b_2\oplus\cdots\oplus b_n)=(a_1,b_1)\oplus (a_2,b_2)\oplus\cdots\oplus (a_n,b_n)=1
\end{equation*}
Hence, $a_1\oplus a_2\oplus\cdots\oplus a_n=1$, $b_1\oplus b_2\oplus\cdots\oplus b_n=1$ so $A\in\rmob (E)$, $B\in\rmob (F)$. If 
$a\in E$, since $C$ is generating
\begin{equation*}
(a,0)=(a_{i_1},b_{i_1})\oplus (a_{i_2},b_{i_2})\oplus\cdots\oplus (a_{j_r}, b_{j_r})
  =(a_{i_1}\oplus a_{i_2}\oplus\cdots\oplus a_{j_r},b_{i_1}\oplus b_{i_2}\oplus\cdots\oplus b_{j_r})
\end{equation*}
Therefore, $a=a_{i_1}\oplus a_{i_2}\oplus\cdots\oplus a_{j_r}$ so $A$ is generating. Similarly, $B$ is generating. It follows that $E,F$ are classical.
\end{proof}

It is shown in \cite{gt24} that $E,F$ are quantum if and only if $E\times F$ in quantum. We next consider matrix representations of composite effect algebras. Let $E,F$ be effect algebras with $m,t$ atoms, respectively. Their matrix representations become
\begin{equation*}
M(E)=\begin{bmatrix}x_{11}&x_{12}&\cdots&x_{1m}\\\vdots&\vdots&&\vdots\\x_{n1}&x_{n2}&\cdots&x_{nm}\end{bmatrix}_{n\times m}\quad
M(F)=\begin{bmatrix}y_{11}&y_{12}&\cdots&y_{1t}\\\vdots&\vdots&&\vdots\\y_{s1}&y_{s2}&\cdots&y_{st}\end{bmatrix}_{s\times t}
\end{equation*}
Letting $a_1,a_2,\ldots ,a_m$ be the atoms of $E$ and $b_1,b_2,\cdots ,b_t$ be the atoms of $F$, we have
$\bigoplus\limits _{j=1}^mx_{ij}a_j=1$, $i=1,2,\ldots ,n$, $\bigoplus\limits _{k=1}^ty_{ik}b_k=1$, $i=1,2,\ldots ,s$. Hence,
\begin{equation*}
\bigoplus _{j=1}^mx_{ij}(a_j,0)\bigoplus _{k=1}^ty_{ik}(0,b_k)=(1,0)\oplus (0,1)=1
\end{equation*}
so $E\times F$ has $m+t$ atoms and $M(E\times F)$ has $m+t$ columns and $ns$ rows. We conclude that $M(E\times F)$ has the form:
\begin{equation*}
M(E\times F)=\begin{bmatrix}x_{11}&x_{12}&\cdots&x_{1m}&y_{11}&y_{12}&\cdots&y_{1t}\\
  \vdots&&&&&&&\\
  x_{11}&x_{12}&\cdots&x_{1m}&y_{s1}&y_{s2}&\cdots&y_{st}\\
  x_{21}&x_{22}&\cdots&x_{2m}&y_{11}&y_{12}&\cdots&y_{1t}\\
  \vdots&&&&&&&\\
  x_{21}&x_{22}&\cdots&x_{2m}&y_{s1}&y_{s2}&\cdots&y_{st}\\
  \vdots&&&&&&&\\
  x_{m1}&x_{m2}&\cdots&x_{nm}&y_{11}&y_{12}&\cdots&y_{1t}\\
  \vdots&&&&&&&\\
  x_{m1}&x_{m2}&\cdots&x_{nm}&y_{s1}&y_{s2}&\cdots&y_{st}\\
\end{bmatrix}_{ns\times (m+t)}
\end{equation*}
The situation is much simpler for classical effect algebras. In this case, if $E$ has $m$ atoms and $F$ has $t$ atoms, then by Theorem~\ref{thm62}, $E$ has one row and $m+t$ columns. Hence, if $M(E)=[x_1x_2\cdots x_m]$ and $M(F)=[y_1y_2\cdots y_t]$ then 
\begin{equation*}
M[E\times F]=[x_1x_2\cdots x_my_1y_2\cdots y_t]
\end{equation*}

\begin{example}  
We find the matrix representations for composites of small classical effect algebras.

$E,F$ have 4 elements: $M(E)=[3]$, $M(F)=[3]$

$E\times F$ has 16 elements: $M(E\times F)=[3\ 3]$

$E,F$ have 4 elements: $M(E)=[3]$, $M(F)=[1\ 1]$

$E\times F$ has 16 elements: $M(E\times F)=[1\ 1\ 3]$

$E,F$ have 4 elements: $M(E)=[1\ 1]$, $M(F)=[1\ 1]$

$E\times F$ has 16 elements: $M(E\times F)=[1\ 1\ 1\ 1]$

$E$ has 2 elements, $F$ has 8 elements: $M(E)=[1]$, $M(F)=[7]$

$E\times F$ has 16 elements: $M(E\times F)=[1\ 7]$\hfill\qedsymbol
\end{example}

\begin{example}  
We find the matrix representations for composites of small nonclassical effect algebras.

$E$ has 2 elements, $F$ has 4 elements: $M(E)=[1]$, $M(F)=\begin{bmatrix}2&0\\0&2\end{bmatrix}$

$E\times F$ has 8 elements: $M(E)=\begin{bmatrix}1&2&0\\1&0&2\end{bmatrix}$
\smallskip

$E,F$ have 4 elements: $M(E)=\begin{bmatrix}2&0\\0&2\end{bmatrix}$, $M(F)=\begin{bmatrix}2&0\\0&2\end{bmatrix}$
\smallskip

$E\times F$ has 16 elements: $M(E\times F)=\begin{bmatrix}2&0&2&0\\2&0&0&2\\0&2&2&0\\0&2&0&2\end{bmatrix}$
\hfill\qedsymbol
\end{example}


\begin{thebibliography}{99}
\bibitem{bkz23}G.\,Binezak, J.\,Kaleta, and A.\,Zembrzuski, Matrix representations of finite effect algebras, \textit{Kybernetika} \textbf{59}, 737--751 (2023).
\bibitem{bgl95}P.\,Busch, M.\,Grabowski and P.\,Lahti, \textit{Operational Quantum Physics}, Springer-Verlag, Berlin, 1995.
\bibitem{blm96}P.\,Busch, P.\,Lahti and P.\,Mittlestaedt, \textit{The Quantum Theory of Measurement}, Springer-Verlag, Berlin, 1996.
\bibitem{dp00}A.\,Dvure\v censkij and S.\,Pulmannov\'a, \textit{New Trends in Quantum Structures}, Kluwer Academic Publ., Bratislava, 2000.
\bibitem{fb94}D. Foulis and M. K. Bennett, Effect algebras and unsharp quantum logics, \textit{Found. Phys.} \textbf{24}, 1331-1352 (1994).
\bibitem{gud97}S.\,Gudder, Effect test spaces and effect algebras, \textit{Found.~Phys.} \textbf{27}, 287-304 (1997).
\bibitem{gt24}S.\,Gudder and T.\,Heinosaari, Finite (quantum) effect algebras, arXiv: quant-ph 2406.13775 (2024).
\end{thebibliography}
\end{document}